\renewcommand{\b}[1]{\boldsymbol{#1}} % No bold!
\newcommand{\bl}{{\b{\lambda}}}
\newcommand{\dl}{\mathscr{D}_{\bl}}
\newcommand*{\btau}{\bm{\tau}}
\newcommand{\sq}{\mathscr{Q}}
\newcommand{\N}{\mathbb{N}}
\newcommand{\pref}{\sqsubseteq}
\newcommand{\Z}{\mathbb{Z}}
\newcommand{\R}{\mathbb{R}}
\DeclareMathOperator{\constr}{alg}
\DeclareMathOperator{\bin}{bin}
\DeclareMathOperator{\prof}{prof}
\newtheorem{theorem}{Theorem}
\newtheorem{observation}[theorem]{Observation}
\newtheorem{lemma}[theorem]{Lemma}
\newtheorem{examp}[theorem]{Example}
\title{Algorithmic Randomness in Continuous-Time Markov Chains}
\author{Xiang Huang}
\affil{University of Illinois Springfield}
\author{Jack H. Lutz}
\affil{Iowa State University}
\author{Neil Lutz}
\affil{University of Notre Dame}
\author{Andrei N. Migunov}
\affil{Drake University}
\date{}
\begin{document}
\maketitle

\begin{abstract}
In this paper we develop the elements of the theory of algorithmic randomness in continuous-time Markov chains (CTMCs).  Our main contribution is a rigorous, useful notion of what it means for an \emph{individual trajectory} of a CTMC to be {\it random}.  CTMCs have discrete state spaces and operate in continuous time.  This, together with the fact that trajectories may or may not halt, presents challenges not encountered in more conventional developments of algorithmic randomness.

Although we formulate algorithmic randomness in the general context of CTMCs, we are primarily interested in the \emph{computational} power of stochastic chemical reaction networks, which are special cases of CTMCs.  This leads us to embrace situations in which the long-term behavior of a network depends essentially on its initial state and hence to eschew assumptions that are frequently made in Markov chain theory to avoid such dependencies.

After defining the randomness of trajectories in terms of martingales (algorithmic betting strategies), we prove equivalent characterizations in terms of algorithmic measure theory and Kolmogorov complexity.  As a preliminary application we prove that, in any stochastic chemical reaction network, \emph{every} random trajectory with bounded molecular counts has the \emph{non-Zeno property} that infinitely many reactions do not occur in any finite interval of time.
\end{abstract}

\section{Introduction}\label{sec:intro}
Stochastic chemical reaction networks are used in molecular programming, DNA nanotechnology, and synthetic biology to model and specify the behaviors of natural and engineered molecular systems.  Stochastic chemical reaction networks are known to be Turing universal \cite{jSoCoWiBr08}, hence capable of extremely complex dynamical behavior.

Briefly and roughly (deferring details until later in the paper), a stochastic chemical reaction network $N$ is a mathematical model of a chemical process in a volume $V$ of solution.  A state of $N$ consists of the nonnegative integer populations of each of its finitely many species (types of molecules) at a given time. The state space is thus countable and discrete.  The network stays in a state for a positive, real-valued sojourn time after which one of the finitely many reactions that $N$ allows to occur among its species produces an instantaneous jump transition to a different state.  Both the sojourn time and the choice of the reaction are probabilistic, with the network behaving as a certain kind of continuous-time Markov chain given by the parameters of $N$.  Hence, given an initial state at time $t=0$, there are in general uncountably many trajectories (sequences of states and sojourn times) that $N$ can traverse.  Some of these trajectories are finite (because $N$ reaches a state in which none of its reactions can occur), and some are infinite.

In this paper we develop the elements of the theory of algorithmic randomness in continuous-time Markov chains (CTMCs).  Specifically, our main contribution is a rigorous, useful notion of what it means for an {\it individual trajectory} (also called a single orbit in dynamical systems theory) of a CTMC $C$ to be {\it random} with respect to $C$ and an initial state---or probability distribution of initial states---of $C$.  This is a first step toward carrying out {\it Kolmogorov's program} of replacing probabilistic laws stating that {\it almost every} trajectory has a given property with stronger randomness laws stating that {\it every random} trajectory has the property.  More generally, we are initiating an algorithmic ``single orbit'' approach (in the sense of Weiss \cite{Weis00}) to the dynamics of CTMCs.  In a variety of contexts ranging from Bernoulli processes to ergodic theory, Brownian motion, and algorithmic learning, this algorithmic single-orbit approach has led to improved understanding of known results \cite{oLiVit19,downey2010algorithmic,nies2009computability,shen2017kolmogorov,v1998ergodic,nandakumar2008effective,fouche2009fractals,jKjoNer09,allen2015zeros,fouche2015kolmogorov,vovk2022algorithmic,shafer2008tutorial,ghosh2012predictive,vovk1999machine}.  In the context of fractal geometry, this approach has even led to recent solutions of classical open problems whose statements did not involve algorithms or single orbits \cite{CCLLMS26,BusFie25,FieStu23,FieStu24a,FieStuUniversal,Fiedler2026-packing,jLutStu20,jLutz21,LutStu24,Sla21,StullPinned,AlBuWi2025,StullOptimal}.
  
The fact that CTMCs have discrete state spaces and operate in continuous time, together with the fact that trajectories may or may not halt, presents challenges not encountered in more conventional developments of algorithmic randomness.  Our formulation of randomness is nevertheless general.  Because we are interested in the {\it computational} power of stochastic chemical reaction networks, we embrace situations in which the long-term behavior of a network depends essentially on its initial state.  Our development thus does not make assumptions that are frequently used in Markov chain theory to avoid such dependencies.

Our approach is also general in another sense, one involving Kolmogorov's program. Once one has succeeded in replacing an ``almost every'' probabilistic law with an ``every random''  law, a natural next question is, ``{\it How much} randomness is sufficient for the latter?''  Saying that an individual object is random is saying that it ``appears random'' to a class of computations.  Roughly speaking, an object is algorithmically random (or Martin-L\"{o}f random) if it appears random to all computably enumerable sets. But weaker notions of randomness such as computable randomness, polynomial-space randomness, polynomial-time randomness, and finite-state randomness, have also been extensively investigated.  Three examples of answers to the ``how much randomness suffices'' question in the context of infinite binary sequences are that (i) every algorithmically random sequence satisfies Birkhoff's ergodic theorem \cite{v1998ergodic}; (ii) every polynomial-time random sequence satisfies the Khinchin-Kolmogorov law of the iterated logarithm \cite{wang1996randomness}; and (iii) every finite-state random sequence satisfies the strong law of large numbers \cite{SchSti72}.

Although we are primarily concerned with algorithmic randomness in the present paper, we want our randomness notion to be general enough to extend easily to other computational ``levels'' of randomness, so that ``how much randomness'' questions can be formulated and hopefully answered. For this reason, we define algorithmic randomness in CTMCs using the
martingale (betting strategy) approach of Schnorr \cite{DBLP:journals/mst/Schnorr71}.  This approach extends to other levels of randomness in a straightforward manner, while our present state (i.e., lack) of knowledge in computational complexity theory does not allow us to extend other approaches (e.g., Martin-L\"{o}f tests or Kolmogorov complexity, which are known to be equivalent to the martingale approach at the algorithmic level \cite{oLiVit19,downey2010algorithmic,nies2009computability,shen2017kolmogorov}) to time-bounded complexity classes.

We develop our algorithmic randomness theory in stages.  In section~\ref{sec:bts} we develop the underlying qualitative structure of {\it Boolean transition systems}, defined so that (i) state transitions are nontrivial, i.e., not from a state to itself, and (ii) trajectories may or may not terminate.  We then show how to use these transition systems to model rate-free chemical reaction networks.

In section~\ref{sec:rss} we add probabilities, thereby defining {\it probabilistic transition systems}.  For each probabilistic transition system $\mathscr{Q}$ and each initialization $\sigma$ of $\mathscr{Q}$ we then define $(\mathscr{Q}, \sigma)$-{\it martingales}, which are strategies for betting on the successive entries in a sequence of states of $(\mathscr{Q}, \sigma)$.  Following the approach of Schnorr \cite{DBLP:journals/mst/Schnorr71}, we then define a maximal state sequence $\b{q}$ of $(\mathscr{Q}, \sigma)$ to be {\textit{random}} if there is no lower semicomputable $(\mathscr{Q}, \sigma)$-martingale that {\it succeeds} on $\b{q}$, i.e., makes unbounded money betting along $\b{q}$.  This notion of randomness closely resembles the well-understood theory of random sequences over a finite alphabet \cite{oLiVit19,downey2010algorithmic,nies2009computability,shen2017kolmogorov}, except that here the state set may be countably infinite; transitions from a state to itself are forbidden; and a positive-probability state sequence may terminate, in which case it is random.

Section~\ref{sec:rsst} is where we confront the main challenge of algorithmic randomness in CTMCs, the fact that they operate in continuous, rather than discrete, time. There we develop the algorithmic randomness of sequences $\b{t} = (t_0, t_1, \ldots)$ of sojourn times $t_i$ relative to corresponding sequences ${\bl} = (\lambda_0, \lambda_1, \ldots)$ of nonnegative real-valued rates $\lambda_i$. Each $\lambda_i$ in such a sequence is regarded as defining an exponential probability distribution function $F_{\lambda_i}$, and the sojourn times $t_i$ are to be independently random relative to these.  We use a careful binary encoding of sojourn times to define ${\bl}$-{\it martingales} that bet along sequences of sojourn times, and we again follow the Schnorr approach, defining a sequence $\b{t}$ of sojourn times to be ${\bl}$-{\it random} if there is no lower semicomputable ${\bl}$-martingale that succeeds on it.

In section~\ref{sec:mmctmc} we put the developments of sections 3 and 4 together. A {\it trajectory} of a continuous-time Markov chain $C$ is a sequence {$\btau$} of ordered pairs $(q_n,t_n)$, where $q_n$ is a state of $C$ and $t_n$ is the sojourn time that $C$ spends in state $q_n$ before jumping to state $q_{n+1}$.  For each continuous-time Markov chain $C$, we define the notion of a $C$-{\it martingale}. In section~\ref{sec:rctmct}, following Schnorr once again, we define a trajectory $\btau$ of $C$ to be {\it random} if no lower semicomputable martingale succeeds on it. We prove analogs of Schnorr's theorem and van Lambalgen's theorem for CTMC trajectories. In section 7, we give a Kolmogorov complexity characterization of the randomness of trajectories of continuous-time Markov chains. As an example application, we prove in section 8 that, in any stochastic chemical reaction network, {\it every} random trajectory $\btau$ with bounded molecular counts has the {\it non-Zeno property} that infinitely many reactions do not occur in any finite interval of time. That is, random trajectories with bounded molecular counts do not exhibit ``finite-time blowup.'' 

\section{Boolean transition systems}\label{sec:bts}
Before developing algorithmic randomness for sequences of states with respect to computable, probabilistic transition systems, we develop the underlying qualitative (not probabilistic) structure by considering transition systems that are Boolean. Some care must be taken to accommodate the fact that, in cases of interest, a sequence of states may either be infinite or end in a terminal state. 

Formally, we define a \textit{Boolean transition system} to be an ordered pair $\mathscr{Q} = (Q, \delta)$ where $Q$ is a nonempty, countable set of \textit{states}, and $\delta:Q \times Q \rightarrow \{0,1\}$ is a \textit{Boolean state transition matrix} satisfying $\delta(q,q) = 0$ for all $q \in Q$.

Intuitively, a Boolean transition system $\mathscr{Q} = (Q, \delta)$ is a nondeterministic structure that may be initialized to any nonempty set of states in $Q$. For $q,r \in Q$, the entry $\delta(q,r)$ in the Boolean transition matrix $\delta$ is the Boolean value $(0 =$ false; $1 = $ true$)$ of the condition that $r$ is reachable from $q$ in one ``step'' of $\mathscr{Q}$. The irreflexivity requirement that every $\delta(q,q) = 0$ (i.e., that $\delta$ have a zero diagonal) reflects the fact that, in all cases of interest in this paper, transitions are nontrivial changes of state. We formalize this intuition, because the formalism will be useful here.

We write $Q^*$ for the set of all finite sequences of states in $Q$, $Q^{\omega}$ for the set of all infinite sequences of states in $Q$, and $Q^{\leq \omega} = Q^* \cup Q^{\omega}$. The \textit{length} of a sequence $\b{q} \in Q^{\leq \omega}$ is

\[
|\b{q}| = \begin{cases}
	\ell &\text{ if } \b{q} = (q_0, q_1, \ldots, q_{\ell-1}) \in Q^*\\
	\omega &\text{ if } \b{q} \in Q^\omega.
\end{cases}
\]

A sequence $\b{q} \in Q^{\leq \omega}$ can thus be written as $\b{q} = (q_i \mid i < |\b{q}|)$ in any case. We write $()$ for the \textit{empty sequence}, the sequence of length 0.

For $\b{q}, \b{r} = (r_i \mid i < |\b{r}|) \in Q^{\leq \omega}$, we say that $\b{q}$ is a \textit{prefix} of $\b{r}$, and we write $\b{q} \sqsubseteq \b{r}$, if $|\b{q}| \leq |\b{r}|$ and $\b{q} = (r_i \mid i < |\b{q}|)$. It is easy to see that $\sqsubseteq$ is a partial ordering of $Q^{\leq \omega}$. 

An \textit{initialization} of a Boolean transition system $\mathscr{Q} = (Q, \delta)$ is a Boolean-valued function $\sigma: Q \rightarrow \{0,1\}$ whose \textit{support} $supp(\sigma) = \{q\in Q\mid \sigma(q) \neq 0\}$ is nonempty.

A Boolean transition system $\mathscr{Q} = (Q, \delta)$ \textit{admits} a sequence $\b{q} = (q_i \mid i < |\b{q}|) \in Q^{\leq \omega}$ with an initialization $\sigma$, and we say that $\b{q}$ is \textit{$\mathscr{Q}$-admissible} from $\sigma$, if the following conditions hold for all $0 \leq i < |\b{q}|$.
\begin{enumerate}
    \item If $i = 0$, then $\sigma(q_i) = 1$.
    \item If $i+1 < |\b{q}|$, then $\delta(q_i,q_{i+1}) = 1$.
\end{enumerate}

A sequence $\b{q} \in Q^{\leq \omega}$ that is $\mathscr{Q}$-admissible from $\sigma$ is \textit{maximal} if, for every sequence $\b{r} \in Q^{\leq \omega}$ that is $\mathscr{Q}$-admissible from $\sigma$, $\b{q} \sqsubseteq \b{r} \implies \b{q} = \b{r}$.

We use the following notations.
\begin{align*}
    Adm_{\mathscr{Q}}(\sigma) &= \{\b{x} \in Q^* \mid \b{x}\text{ is }\mathscr{Q}\text{-admissible from }\sigma\}.\\
    \mathds{A}_{\mathscr{Q}}(\sigma) &= \{\b{q} \in Q^{\leq\omega} \mid \b{q}\text{ is a maximal }\mathscr{Q}\text{-admissible sequence from }\sigma\}.
\end{align*}

When $\mathscr{Q}$ is obvious from the context, we omit it from the notation and write these sets as $Adm(\sigma)$ and $\mathds{A}(\sigma)$. Note that elements of $Adm_\mathscr{Q}(\sigma)$ are required to be \textit{finite} sequences.

Intuitively, $\mathds{A}_{\mathscr{Q}}(\sigma)$ is the set of all possible ``behaviors'' of the Boolean transition system $\mathscr{Q} = (Q,\delta)$ with the state initialization $\sigma:Q \rightarrow \{0,1\}$. The fact that $\delta$ is irreflexive implies that $q_i \neq q_{i+1}$ holds for all $i \in \mathbb{N}$ such that $i+1 < |\b{q}|$ in every admissible sequence $\b{q} = (q_i \mid i < |\b{q}|) \in \mathds{A}_{\mathscr{Q}}(\sigma)$. In this paper we do not regard the indices $i = 0, 1, \ldots$ in a state sequence $\b{q} = (q_0, q_1, \ldots)$ as successive instants in discrete time. In our main applications, the amount of time spent in state $q_i$ varies randomly and continuously, so it is more useful to think of the indices $i = 0, 1, \ldots$ as finite ordinal numbers, i.e., to think of $q_i$ as merely the $i^{\textit{th}}$ state in the sequence $\b{q}$.

Each $\b{x} \in Adm_\mathscr{Q}(\sigma)$ is the \textit{name} of the \textit{$\mathscr{Q}$-cylinder}
\begin{equation}\label{eq:Qcylinder}
    \mathds{A}_{\b{x}}(\sigma) = \{\b{q} \in \mathds{A}_{\mathscr{Q}}(\sigma) \mid \b{x} \sqsubseteq \b{q}\}.
\end{equation}
Each $\b{x} \in Adm(\sigma)$ is a finite---and typically partial---specification of each sequence $\b{q} \in \mathds{A}_{\b{x}}(\sigma)$. The collection
\[\mathscr{A}(\sigma) = \mathscr{A}_{\mathscr{Q}}(\sigma) = \{ \mathds{A}_{\b{x}}(\sigma) \mid \b{x} \in Adm_\mathscr{Q}(\sigma)\}\]
is a basis for a topology on $\mathds{A}(\sigma)$. The open sets in this topology are simply the sets that are unions of (finitely or infinitely many) cylinders in $\mathscr{A}(\sigma)$. The metric (in fact, ultrametric) $d$ on $Q^{\leq\omega}$ defined by $$d(\b{q},\b{r}) = 2^{-|\b{p}|},$$ where $\b{p}$ is the longest common prefix of $\b{q}$ and $\b{r}$ (and $2^{-\infty} = 0$), induces this same topology on $\mathds{A}_{\mathscr{Q}}(\sigma)$ for each Boolean transition system $\mathscr{Q} = (Q, \delta)$ and each state initialization $\sigma: Q \rightarrow \{0,1\}$. With this topology, $\mathds{A}_{\mathscr{Q}}(\sigma)$ is a Polish space (a complete, separable metric space). This Polish space is not necessarily perfect, because it may have isolated points. Among these,
when they exist, are finite sequences, i.e., sequences $x \in Q^* \cap \mathds{A}_{\mathscr{Q}}(\sigma)$. Such sequences $\b{x}$ are said to \textit{halt}, or \textit{terminate}, in $\mathscr{Q}$ from $\sigma$.

A Boolean transition system $\mathscr{Q} = (Q, \delta)$ is \textit{computable} if the elements of $Q$ are naturally represented in such a way that $(i)$ the Boolean-valued function $\delta$ is computable, and $(ii)$ the set of \textit{terminal states} (i.e., states $q \in Q$ such that $\delta(q, r) = 0$ for all $r \in Q$) is decidable. An initialization $\sigma:Q \rightarrow \{0,1\}$ is \textit{computable} if its support is decidable.

An important class of examples of Boolean transition systems consists of those that model rate-free chemical reaction networks. Formally, let $\b{S} = \{X_0, X_1, X_2, \ldots\}$ be a countable set of distinct \textit{species} $X_n$, each of which we regard as an abstract type of molecule. A \textit{rate-free chemical reaction network} (or \textit{rate-free CRN}) is an ordered pair $N=(S,R)$, where $S \subseteq \b{S}$ is a finite set of species, and $R$ is a finite set of (\textit{rate-free}) \textit{reactions} on $S$, each of which is formally an ordered pair $\rho = (r, p)$ of distinct vectors $r, p \in \mathbb{N}^S$ (equivalently, functions $r,p:S \rightarrow \mathbb{N}$). Informally, we write species in notations convenient for specific problems ($X, Y, Z, \widehat{X}, \overline{Y}$, etc.) rather than as subscripted elements of $\b{S}$, and we write reactions in a notation more suggestive of chemical reactions. For example, 
\begin{equation}\label{raction:example}
X + Z \rightarrow 2Y + Z
\end{equation}
is a rate-free reaction on the set $S = \{X, Y, Z\}$. If we consider the elements of $S$ to be ordered as written, then the left-hand side of (\ref{raction:example}) is formally the \textit{reactant vector} $r = (1, 0, 1)$, and the right-hand side of (\ref{raction:example}) is the \textit{product vector} $p = (0, 2, 1)$. A species $Y \in S$ is called a \textit{reactant} of a reaction $\rho = (r, p)$ if $r(Y) > 0$ and a \textit{product} of $\rho$ if $p(Y) > 0$.

Intuitively, the reaction $\rho$ in (\ref{raction:example}) means that, if a molecule of species $X$ encounters a molecule of species $Z$, then the reaction $\rho$ may occur, in which case the reactants $X$ and $Z$ disappear and the products---two molecules of species $Y$ and a molecule of species $Z$---appear in their place. Accordingly, the \textit{net effect} of a reaction $\rho = (r,p)$ is the vector $\Delta\rho \in \mathbb{Z}^S$ defined by
\begin{equation*}
\Delta\rho(Y) = p(Y) - r(Y)
\end{equation*}
for all $Y \in S$. Since we have required $r$ and $p$ to be distinct, $\Delta\rho$ is never the zero vector in $\mathbb{Z}^S$. 

In this paper, a \textit{state} of a chemical reaction network $N = (S,R)$ is a vector $q \in \mathbb{N}^S$. Intuitively, $N$ is modeling chemical processes in a solution, and the state $q$ denotes a situation in which, for each $Y \in S$, exactly $q(Y)$ molecules of species $Y$ are present in the solution.

A reaction $\rho = (r,p) \in R$ of a chemical reaction network $N = (S,R)$ \textit{can occur} (or is \emph{enabled}) in a state $q \in \mathbb{N}^S$ if 
\begin{equation*}
q(Y) \geq r(Y)
\end{equation*}
 holds for every $Y \in S$, i.e. if the reactants of $\rho$ are present in $q$. If this reaction $\rho$ \textit{does} occur in state $q$, then it transforms $q$ to the new state $q + \Delta\rho$. 

The behavior of a rate-free chemical reaction network $N = (S,R)$ clearly coincides with that of the Boolean transition system $\mathscr{Q}_N = (\mathbb{N}^S, \delta)$, where $\delta: \mathbb{N}^S \times \mathbb{N}^S \rightarrow \{0,1\}$ is defined by setting each $\delta(q, q')$ to be the Boolean value of the condition that some reaction $\rho \in R$ transforms the state $q$ to the state $q'$. Boolean transition systems of this form are clearly computable and have other special properties. As one example, for each $q \in \mathbb{N}^S$, there are only finitely many $q' \in \mathbb{N}^S$ for which $\delta(q, q') = 1$. 

Rate-free chemical reaction networks, and Boolean transition systems more generally, raise significant and deep problems in distributed computing \cite{leroux2015demystifying,czerwinski2019reachability}, but our focus here is on randomness, which we introduce in the following section.

\section{Random state sequences}\label{sec:rss}
This section develops the elements of algorithmic randomness for sequences of states with respect to computable, probabilistic transition rules. The probabilistic transition systems defined in this section, in which self-transitions have probability zero, correspond to the state sequences of continuous-time Markov chains.

Formally, we define a \textit{probabilistic transition system} to be an ordered pair $\mathscr{Q} = (Q,\pi)$, where $Q$ is a countable set of \textit{states}, and $\pi:Q \times Q \rightarrow [0,1]$ is a \textit{probabilistic transition matrix}, by which we mean that $\pi$ satisfies the following two conditions for each state $q \in Q$. 
\begin{enumerate}
\item $\pi(q,q) = 0$.
\item The sum $\pi(q) = \sum_{r\in Q} \pi(q, r)$ is either 0 or 1.
\end{enumerate}

If $\pi(q)=0$, then $q$ is a \textit{terminal state}. If $\pi(q)=1$, then $q$ is a \textit{nonterminal state}. 

If $\mathscr{Q} = (Q, \pi)$ is a probabilistic transition system, and we define $\delta: Q \times Q \rightarrow \{0,1\}$ by $$\delta(q, r) = sgn(\pi(q,r))$$ for all $q, r \in Q$, where $sgn:[0,\infty) \rightarrow \{0,1\}$ is the signum function 
\[
    sgn(x)= \begin{cases}
    	0 &\textit{if } x = 0\\
    	1 &\textit{if } x > 0,
    \end{cases}
\]
then $\mathscr{Q}_B = (Q, \delta)$ is the Boolean transition system \textit{corresponding} to $Q$. The essential difference between $\mathscr{Q}_B$ and $\sq$ is that, while $\delta(q,r)$ merely says whether it is \textit{possible} for $\sq_B$ (or $\sq$) to transition from $q$ to $r$ in one step, $\pi(q,r)$ is the \textit{quantitative probability} of doing so. 

An \textit{initialization} of a probabilistic transition system $\sq = (Q, \pi)$ is a discrete probability measure $\sigma$ on $Q$, i.e., a function $\sigma: Q \rightarrow [0,1]$ satisfying $\sum_{q \in Q} \sigma(q) = 1$. The \textit{Boolean version} of such an initialization $\sigma$ is the function $\sigma_B:Q \rightarrow \{0,1\}$ defined by $$\sigma_B(q) = sgn(\sigma(q))$$ for each $q \in Q$. It is clear that $\sigma_B$ is an initialization of $\sq_B$.

Given a probabilistic transition system $\sq = (Q, \pi)$ and an initialization $\sigma$ of $\sq$, we define the sets 

\begin{align*}
    Adm(\sigma) &= Adm_\sq(\sigma) = Adm_{\sq_B}(\sigma_B)\\
    \mathds{A}(\sigma) &= \mathds{A}_{\mathscr{Q}}(\sigma) = \mathds{A}_{\sq_B}(\sigma_B),
\end{align*}
relying on the fact that the right-hand sets were defined in section~\ref{sec:bts}. The notations and terminology in section~\ref{sec:bts} leading up to these definitions are similarly extended to probabilistic transition systems, as are the definitions of the $\sq$-cylinders $\mathds{A}_{\b{x}}(\sigma)$ and the basis $\mathscr{A}(\sigma)$ for the topology $\mathds{A}(\sigma)$. 

What we can do here that we could not do for Boolean transition systems is define a Borel probability measure on each set $\mathds{A}_{\mathscr{Q}}(\sigma)$. Specifically, for each probabilistic transition system $\sq = (Q, \pi)$ and each initialization $\sigma$ of $\sq$, define the function
\[\mu_{\sq, \sigma}: Adm_\sq(\sigma) \rightarrow [0,1]\]
as follows. Let $\b{x} = (x_i \mid i < |\b{x}|) \in Adm_\sq(\sigma)$. If $|\b{x}| = 0$, then $\mu_{\sq,\sigma}(x) = 1$. If $|\b{x}| > 0$, then 
\begin{equation*}\mu_{\sq,\sigma}(\b{x}) = \sigma(x_0){\displaystyle \prod_{i=0}^{|\b{x}|-2}\pi(x_i, x_{i+1})}. \end{equation*}

Since $\b{x}$ is a name of the cylinder $\mathds{A}_{\mathscr{Q},\b{x}}(\sigma)$, each $\mu_{\sq,\sigma}(\b{x})$ here should be understood as an abbreviation of $\mu_{\sq,\sigma}(\mathds{A}_{\b{x}}(\sigma))$, which is intuitively the probability that an element of $\mathds{A}_{\sq,\b{x}}(\sigma)$ begins with the finite sequence $\b{x}$. 

\begin{observation}
	If a sequence $\b{x} \in Adm_{\sq}(\sigma)$ does not terminate, then \begin{equation*}\mu_{\sq,\sigma}(\b{x}) = \sum_{\substack{\b{x} \sqsubseteq \b{y} \in Adm_\sq(\sigma)\\|\b{y}| = |\b{x}| + 1}} \mu_{\sq,\sigma}(\b{y}). \end{equation*}
\end{observation}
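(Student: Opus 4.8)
The plan is to unfold the definition~(3.1) of $\mu_{\sq,\sigma}$, reducing~(3.2) to the single identity ``$\sum_{r \in Q}\pi(x_{n-1},r) = 1$'', which is exactly condition~(2) in the definition of a probabilistic transition matrix and is available precisely because the last state of a non-terminating $\b{x}$ is nonterminal. First I would dispose of the boundary case $|\b{x}| = 0$: the empty sequence is always in $Adm_\sq(\sigma)$ and never terminates (some atom of the probability measure $\sigma$ is positive, so it is not maximal), its measure is $\mu_{\sq,\sigma}(()) = 1$, and the length-$1$ admissible extensions are exactly the $(q)$ with $\sigma(q) > 0$, each with $\mu_{\sq,\sigma}((q)) = \sigma(q)$ by the empty-product convention in~(3.1); so the right side of~(3.2) is $\sum_{q \in Q}\sigma(q) = 1$.

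For $\b{x} = (x_i \mid i < n) \in Adm_\sq(\sigma)$ with $n \ge 1$, I would first pin down the index set of the sum in~(3.2). Any $\b{y}$ with $\b{x} \sqsubseteq \b{y}$ and $|\b{y}| = n+1$ has the form $\b{y} = (x_0,\dots,x_{n-1},r)$ for some $r \in Q$; since $\b{x}$ is already $\sq$-admissible from $\sigma$, the only new admissibility requirement is condition~(ii) at index $i = n-1$, namely $\delta(x_{n-1},r) = 1$, which by $\delta(x_{n-1},r) = sgn(\pi(x_{n-1},r))$ is the same as $\pi(x_{n-1},r) > 0$. Formula~(3.1) then gives, for each such $r$,
\[
\mu_{\sq,\sigma}(\b{y}) \;=\; \sigma(x_0)\left(\prod_{i=0}^{n-2}\pi(x_i,x_{i+1})\right)\pi(x_{n-1},r) \;=\; \mu_{\sq,\sigma}(\b{x})\,\pi(x_{n-1},r).
\]
Summing over the admissible one-state extensions, and observing that the terms with $\pi(x_{n-1},r) = 0$ (in particular $r = x_{n-1}$, by condition~(1)) contribute nothing, the right side of~(3.2) equals $\mu_{\sq,\sigma}(\b{x})\sum_{r\in Q}\pi(x_{n-1},r)$. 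Because $\b{x}$ does not terminate it is not a maximal finite admissible sequence, so $x_{n-1}$ is a nonterminal state and $\sum_{r\in Q}\pi(x_{n-1},r) = \pi(x_{n-1}) = 1$ by condition~(2), which yields~(3.2).

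I do not anticipate any genuine obstacle: the argument is pure bookkeeping. The only points needing a little care are (a) translating ``$\b{x}$ does not terminate'' into ``the last state of $\b{x}$ is nonterminal'' via the characterization of terminating sequences as finite maximal admissible sequences, (b) checking that passing from $\b{x}$ to a one-state extension introduces exactly the one constraint $\pi(x_{n-1},r) > 0$, and (c) the empty-sequence case, where the empty product in~(3.1) makes $\mu_{\sq,\sigma}((q)) = \sigma(q)$. After these, the equality is immediate from condition~(2).
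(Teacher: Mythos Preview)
Your argument is correct and is exactly the straightforward unfolding of definition~(3.1) together with condition~(2) on $\pi$ that the authors have in mind; the paper states this as an observation without proof, so there is nothing further to compare.
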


The above observation implies that $\mu_{\sq,\sigma}$ can, by standard techniques (e.g., Dynkin's $\pi$-$\lambda$ theorem \cite{billingsley1995probability}), be extended to a Borel probability measure on $\mathds{A}_{\sq}(\sigma)$, i.e., to a function $\mu_{\sq,\sigma}$ that assigns probability $\mu_{\sq,\sigma}(E)$ to every Borel set $E \subseteq \mathds{A}_{\sq}(\sigma)$.

An arbitrary set $E\subseteq\mathds{A}_{\sq}(\sigma)$ has \emph{probability 0}, and we write $\mu_{\sq,\sigma}(E)=0$, if there is some Borel set $F\supseteq E$ such that $\mu_{\sq,\sigma}(F)=0$. It has \emph{algorithmic probability 0}, and we write $\mu_{\sq,\sigma,\constr}(E)=0$, if there is a computable function $g:\N\times\N\to Adm_\sq(\sigma)$ such that, for all $k\in\N$,
\[E\subseteq\bigcup_{\ell\in\N}\mathds{A}_{g(k,\ell)}(\sigma)\]
and
\[\sum_{\ell\in\N}\mu_{\sq,\sigma}(g(k,\ell))\leq 2^{-k}.\]

    If $\sq$ is a probabilistic transition system and $\sigma$ is an initialization of $\sq$, then a $(\sq, \sigma)$-\emph{martingale} is a function $$d:Adm_\sq(\sigma) \rightarrow [0, \infty)$$ such that, for every non-terminating sequence $\b{x} \in Adm_\sq(\sigma)$,
	\begin{equation}\label{eq:state_martingale}
		d(\b{x})\mu_{\sq,\sigma}(\b{x}) = \sum_{\substack{\b{x} \sqsubseteq \b{y} \in Adm_\sq(\sigma)\\|\b{y}| = |\b{x}| + 1}} d(\b{y})\mu_{\sq, \sigma}(\b{y}).
	\end{equation}

Intuitively, a $(\sq, \sigma)$-martingale $d$ is a gambler that bets on the successive states in a sequence
\[\b{q} = (q_i \mid i < |\b{q}|) \in \mathds{A}_{\sq}(\sigma).\]
The gambler's initial capital is $d(())$, and its capital after betting on a prefix $\b{x} \in Adm_\sq(\sigma)$ of $\b{q}$ is $d(\b{x})$. The condition (\ref{eq:state_martingale}) says that the payoffs are fair with respect to the probability measure $\mu = \mu_{\sq,\sigma}$ in the sense that the conditional expectation of the gambler's capital after betting on the state following $\b{x}$ in $\b{q}$, given that $\b{x} \sqsubseteq \b{q}$, is exactly the gambler's capital before placing this bet.

A $(\sq, \sigma)$-martingale $d$ \emph{succeeds} on a sequence $\b{q} \in \mathds{A}_{\sq}(\sigma)$ if the set $$\{d(\b{x}) \mid \b{x} \in Adm_\sq(\sigma) \text{ and } \b{x} \sqsubseteq \b{q}\}$$ is unbounded. The \emph{success set} of a $(\sq, \sigma)$-martingale $d$ is $$S^\infty[d] = \{ \b{q} \in \mathds{A}_{\sq}(\sigma) \mid d \text{ succeeds on } \b{q} \}.$$

Following standard practice, we develop randomness by imposing computability conditions on martingales. Recall that, if $D$ is a discrete domain, then a function $f:D \rightarrow \mathbb{R}$ is \textit{computable} if there is a computable function $\hat{f}:D \times \mathbb{N} \rightarrow \mathbb{Q}$ such that, for all $x \in D$ and $r \in \mathbb{N}$, $$|\hat{f}(x,r) - f(x)| \leq 2^{-r}.$$ The parameter $r$ here is called a \textit{precision parameter}.

A function $f: D \rightarrow \mathbb{R}$ is \textit{lower semicomputable} if there is a computable function $\hat{f}:D \times \mathbb{N} \rightarrow \mathbb{Q}$ such that the following two conditions hold for all $x \in D$.
\begin{enumerate}
    \item For all $s \in \mathbb{N}, \hat{f}(x,s) \leq \hat{f}(x, s+1) < f(x)$.
    \item $\displaystyle\lim_{s \rightarrow \infty} \hat{f}(x,s) = f(x)$.
\end{enumerate}
The parameter $s$ is sometimes called a \textit{patience parameter}, because the convergence in the second condition can be very slow. 

A probabilistic transition system $\sq = (Q, \pi)$ is \textit{computable} if the elements of $Q$ are naturally represented in such a way that (i) the probability transition matrix $\pi:Q \times Q \rightarrow [0,1]$ is computable in the above sense, and (ii) the support of $\pi$ and the set of terminal states are decidable. (It is well known that (ii) does not follow from (i)~\cite{oKo91,oWeih00}. Fortunately, (ii) does hold in many cases of interest, including chemical reaction networks.)

Similarly, an initialization $\sigma$ of a probabilistic transition system $\sq = (Q, \pi)$ is \textit{computable} if (i) the function $\sigma:Q \rightarrow [0,1]$ is computable, and (ii) the support of $\sigma$ is decidable.

Let $\sq$ be a probabilistic transition system that is computable, and let $\sigma$ be an initialization of $\sq$ that is also computable. A state sequence $\b{q} \in \mathds{A}_{\sq}(\sigma)$ is (\textit{algorithmically}) \textit{random} if there is no lower semicomputable $(\sq, \sigma)$-martingale that succeeds on $\b{q}$.

This notion of random sequences in $\mathds{A}_{\sq}(\sigma)$ closely resembles the well-understood theory of random sequences on a finite alphabet \cite{zvonkin1970complexity, schnorr1977survey}. The main differences are that here the state set may be countably infinite; transitions from a state to itself are forbidden; and a positive-probability state sequence may terminate, in which case it is clearly random. The following analogs of the theorems of Ville and Schnorr hold for probabilistic transition systems. 

\begin{theorem}[\normalfont Ville \cite{Ville39}]\label{ville_state}
	Let $\sq$ be a probabilistic transition system, and let $\sigma$ be an initialization of $\sq$. For every set $E \subseteq \mathds{A}_{\sq}(\sigma)$, the following two conditions are equivalent.
	\begin{enumerate}
        \item $\mu_{\sq, \sigma}(E) = 0$.
        \item There is a $(\sq, \sigma)$-martingale $d$ such that $E \subseteq S^\infty[d]$.
    \end{enumerate}
\end{theorem}

\begin{theorem}[\normalfont Schnorr \cite{DBLP:journals/mst/Schnorr71}]\label{schnorr_state}
    Let $\sq$ be a computable probabilistic transition system and let $\sigma$ be a computable initialization of $\sq$. For every set $E \subseteq \mathds{A}_{\sq}(\sigma)$, the following two conditions are equivalent.
    \begin{enumerate}
        \item $\mu_{\sq, \sigma,\constr}(E) = 0$.
        \item There is a lower semicomputable $(\sq, \sigma)$-martingale $d$ such that $E \subseteq S^\infty[d]$.
    \end{enumerate}
\end{theorem}

\section{Random sequences of sojourn times}\label{sec:rsst}

The ``sojourn time'' that a continuous-time Markov chain spends in a state before jumping to a new state may be any element of $(0,\infty]$, i.e., any duration $t$ that is either a (strictly) positive real number or $\infty$. This section thus develops the elements of algorithmic randomness for sequences of durations $t \in (0,\infty]$ with respect to sequences of probability measures that occur in continuous-time Markov chains. 

A \textit{rate} in this paper is a nonnegative real number $\lambda \in [0,\infty)$. We rely on context to distinguish this standard use of $\lambda$ from the equally standard use of $\lambda$ to denote the empty string.

We interpret each rate $\lambda >0 $ as a name of the exponential probability measure with rate $\lambda$, i.e., the probability measure on $(0,\infty]$ whose cumulative distribution function $F_\lambda:(0,\infty] \rightarrow [0,1]$ is given by $$F_\lambda(t) = 1-e^{-\lambda t}$$ for all $t \in (0,\infty]$, where $e^{-\infty} = 0$. We interpret the rate $\lambda = 0$ as a name of the point-mass probability on $(0, \infty]$ that concentrates all the probability at $\infty$. This has the cumulative distribution function $F_0:(0,\infty] \rightarrow [0,1]$ given by 

\[
F_0(t) = \begin{cases}
	0 &\text{ if } t \in (0,\infty)\\
	1 &\text{ if } t = \infty.
\end{cases}
\]

We associate each string $w \in \{0,1\}^*$ with the interval $I_w \subseteq [0,1]$ defined as follows. Let $w$ be the lexicographically $i^{\text{th}}$ $(0 \leq i < 2^{|w|})$ element of $\{0,1\}^{|w|}$ where $0^{|w|}$ is the $0^{\text{th}}$ element and $1^{|w|}$ is the $(2^{|w|} -1)^{\text{st}}$ element. Then 
$$I_w = (2^{-|w|}i, 2^{-|w|}(i+1)].$$
Note that, for each $w \in \{0,1\}^*$ and $\ell \in \mathbb{N}$, the intervals $I_{wu}$, for $u \in \{0,1\}^\ell$, form a \textit{left-to-right partition} of $I_w$, i.e., a partition of $I_w$ in which $I_{wu}$ lies to the left of $I_{wv}$ if and only if $u$ lexicographically precedes $v$.

For each rate $\lambda \in [0,\infty)$ and each string $w \in \{0,1\}^*$, define the interval $$D_\lambda(w) = F_\lambda^{-1}(I_w) \subseteq (0,\infty].$$
\begin{examp}\label{example:dist}
	If $\lambda > 0$, then 
	\begin{align*}
		D_\lambda(00) = (0,a_1], &\quad  D_\lambda(01) = (a_1, a_2], \\
		D_\lambda(10) = (a_2, a_3], &\quad D_\lambda(11) = (a_3, \infty],
	\end{align*}
	where $a_1 = \frac{2\ln2 - \ln3}{\lambda}$, $a_2 = \frac{\ln2}{\lambda}$, and $a_3 = \frac{2\ln2}{\lambda}$. On the other hand, $D_0(00)=D_0(01) = D_0(10) = \emptyset$, and $D_0(11) = \{\infty\}$.
\end{examp}
\begin{observation}
	If $\lambda >0$, then, for each $\ell \in \mathbb{N}$, the intervals $D_\lambda(w)$, for $w \in \{0,1\}^\ell$, form a left-to-right partition of $(0, \infty]$ into intervals that are equiprobable with respect to $F_\lambda$.

\end{observation}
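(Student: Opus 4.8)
The plan is to deduce everything from a single structural fact: when $\lambda > 0$, the cumulative distribution function $F_\lambda$ restricts to a strictly increasing continuous bijection of $(0,\infty]$ onto $(0,1]$. Granting this, the observation becomes the statement that such a bijection carries the already-established left-to-right, half-open partition of $(0,1]$ into the intervals $I_w$, $w\in\zo^l$, over to a left-to-right, half-open partition of $(0,\infty]$, while transporting Lebesgue length on $(0,1]$ to the exponential measure $P_\lambda$ named by $\lambda$.

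First I would check the structural fact. Strict monotonicity on $(0,\infty)$ is immediate from $F_\lambda'(t) = \lambda e^{-\lambda t} > 0$, and $F_\lambda(\infty) = 1 > 1 - e^{-\lambda t} = F_\lambda(t)$ for every finite $t$, so $F_\lambda$ is strictly increasing on all of $(0,\infty]$; continuity is clear on $(0,\infty)$, and at the improper point $\lim_{t\to\infty}F_\lambda(t) = 1 = F_\lambda(\infty)$; surjectivity onto $(0,1]$ follows from $\lim_{t\to 0^+}F_\lambda(t)=0$ (a value never attained) together with the intermediate value theorem, the value $1$ being attained only at $t=\infty$. Hence $F_\lambda^{-1}\colon(0,1]\to(0,\infty]$ is a strictly increasing bijection, and for all $0\le c<d\le 1$ one has $F_\lambda^{-1}\big((c,d]\big)=\big(F_\lambda^{-1}(c),\,F_\lambda^{-1}(d)\big]$, under the conventions $F_\lambda^{-1}(0)=0$ and $F_\lambda^{-1}(1)=\infty$.

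Next I would apply $F_\lambda^{-1}$ termwise to the partition $\{I_w:w\in\zo^l\}$ of $(0,1]$ (this is the stated fact about the $I_w$, applied with the empty string as the prefix, so that $\bigcup_{w\in\zo^l} I_w = (0,1]$). If $w$ is the lexicographically $i$-th element of $\zo^l$, then $I_w=(2^{-l}i,\,2^{-l}(i+1)]$, so $D_\lambda(w)=F_\lambda^{-1}(I_w)=\big(F_\lambda^{-1}(2^{-l}i),\,F_\lambda^{-1}(2^{-l}(i+1))\big]$, an interval contained in $(0,\infty]$, with $\infty\in D_\lambda(w)$ only for $w=1^l$. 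Since $F_\lambda^{-1}$ is a bijection, the $D_\lambda(w)$ are pairwise disjoint and $\bigcup_{w\in\zo^l}D_\lambda(w)=F_\lambda^{-1}\big((0,1]\big)=(0,\infty]$; since $F_\lambda^{-1}$ is increasing, $D_\lambda(w)$ lies entirely to the left of $D_\lambda(v)$ whenever $w$ lexicographically precedes $v$. That is precisely the claimed left-to-right partition of $(0,\infty]$. For equiprobability, $P_\lambda\big((s,t]\big)=F_\lambda(t)-F_\lambda(s)$ gives $P_\lambda\big(D_\lambda(w)\big)=F_\lambda\big(F_\lambda^{-1}(2^{-l}(i+1))\big)-F_\lambda\big(F_\lambda^{-1}(2^{-l}i)\big)=2^{-l}$, a quantity independent of $w$.

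I do not expect a genuine obstacle: the entire content is that a strictly increasing continuous bijection preserves interval structure, order, and disjointness, and (being the CDF) transports length to $P_\lambda$. The only points deserving care are bookkeeping ones — keeping all intervals half-open so that distinct blocks $D_\lambda(w)$ meet only at common endpoints that neither contains, and correctly locating the single improper point $\infty$, which is the unique $F_\lambda$-preimage of $1$ and hence lies in $D_\lambda(1^l)$ and in no other block.
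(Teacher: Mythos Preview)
Your argument is correct. The paper states this result as an observation without proof, so there is no approach to compare against; your derivation via the strictly increasing continuous bijection $F_\lambda\colon(0,\infty]\to(0,1]$ is exactly the natural one and handles the half-open endpoints and the location of $\infty$ cleanly.
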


Example \ref{example:dist} shows that the assumption $\lambda >0$ is essential here.

For each rate $\lambda \in [0,\infty)$, each duration $t \in (0,\infty]$, and each $w \in \{0,1\}^*$, we call $w$ a \textit{$\lambda$-approximation} (or a \textit{partial $\lambda$-specification}) of $t$, and we write $w \sqsubseteq_\lambda t$, if $t \in D_\lambda(w)$. Note that, for each duration $t\in (0,\infty]$, there is a unique sequence $\bin_{\lambda}(t)\in\{0,1\}^{\omega}$ such that, for every prefix $w\sqsubseteq \bin_{\lambda}(t)$, we have $w\sqsubseteq_{\lambda}t$. Note also that for $\lambda>0$ the function $t\mapsto \bin_{\lambda}(t)$ is a measure-preserving one-to-one correspondence between $(0, \infty]$, endowed with the probability measure $\mu_{\lambda}$ whose cumulative distribution function is $F_{\lambda}$, and the Lebesgue measure 1 subset of $\{0,1\}^{\omega}$ consisting of those sequences containing infinitely many 1s. 

For this discussion, consider a \textit{martingale} to be as originally defined by Ville \cite{Ville39}, namely, a function $d:\{0,1\}^*\to[0,\infty)$ such that, for all $w\in\{0,1\}^*$,
\[
    d(w)= \frac{d(w0)+d(w1)}{2}.
\]
Given a rate $\lambda\in [0,\infty)$, say that a martingale $d$ $\lambda$-\textit{succeeds} on a duration $t\in (0,\infty]$, and write $t\in S_{\lambda}[d]$, if the set
\[
    \{d(w)\mid w \sqsubseteq_{\lambda} t\}
\]
is unbounded. We call $S_{\lambda}[d]$ the $\lambda$-\textit{success set} of $d$. By the preceding paragraph, the following theorem follows immediately from Ville's original theorem.

\begin{theorem}\label{thm:New_thm_six}
    For every set $E\subseteq (0,\infty]$ of durations, the following two conditions are equivalent.
\end{theorem}
\begin{enumerate}
    \item $\mu_{\lambda}(E)=0.$
    \item There is a martingale $d$ such that $E\subseteq S_{\lambda}[d]$.
\end{enumerate}

Motivated by Theorem~\ref{thm:New_thm_six}, for each rate $\lambda\in [0,\infty)$, we define a duration $t \in (0,\infty]$ to be \emph{(algorithmically) $\lambda$-random} if there is no lower semicomputable martingale $d$ such that $t\in S_{\lambda}[d]$.

Our next objective is to extend the above definition to the \textit{independent} randomness of a finite sequence $(t_0, \ldots,t_{n-1})\in (0,\infty]^{n}$ of durations with respect to a corresponding sequence $(\lambda_0,\ldots, \lambda_{n-1})\in [0,\infty)^n$ of rates. There are several equivalent ways of doing this, but the simplest for our purpose here is to adapt van Lambalgen's interleaving method \cite{jLamb87} to the present context.

Given $n$ binary sequences $x_0,\ldots,x_{n-1}\in\{0,1\}^{\leq\omega}$ of equal length, define the \emph{interleaving} of these sequences $z=\bigsqcup_{r=0}^{n-1} x_r$ by
\[z[qn+r]=x_r[q]\]
for each $q<|x_0|$ and $0\leq r <n$. Intuitively, $\bigsqcup_{r=0}^{n-1} x_r$ is a ``perfect shuffle'' of the sequences $x_0,\ldots,x_{n-1}$.

For any finite rate sequence $(\lambda_0, \ldots, \lambda_{n-1})\in[0,\infty)^n$ and duration sequence $(t_0,\ldots, t_{n-1})\in (0,\infty]^n$, define the sequence
\[
	\bin_{(\lambda_0,\ldots,\lambda_{n-1})}(t_0,\ldots, t_{n-1})=\bigsqcup_{r=0}^{n-1}\bin_{\lambda_r}(t_r).
\]
Given a martingale $d$, we say the martingale $(\lambda_0, \ldots, \lambda_{n-1})$-\textit{succeeds} on the sequence $(t_0,\ldots, t_{n-1})$, and write $(t_0,\ldots, t_{n-1})\in S_{(\lambda_0,\ldots, \lambda_{n-1})}[d]$, if
\[
	\limsup_{k\to \infty}d\big(\bin_{(\lambda_0,\ldots,\lambda_{n-1})}(t_0,\ldots, t_{n-1})[0\ldots k-1]\big)=\infty.
\]
Then a finite sequence $(t_0,\ldots, t_{n-1})\in (0,\infty]^n$ of durations is \textit{independently} $(\lambda_0,\ldots,\lambda_{n-1})$-\textit{random}, where $(\lambda_0,\ldots, \lambda_{n-1})\in[0,\infty)^n$, if there is no lower semicomputable martingale $d$ such that
\[
(t_0,\ldots, t_{n-1})\in S_{(\lambda_0,\ldots,\lambda_{n-1})}[d].
\]

We now lift the above ideas to possibly infinite sequences of rates and durations.
A \textit{rate sequence} is a nonempty sequence $\b{\lambda} = (\lambda_i \mid 0 \leq i < |\b{\lambda}|) \in [0,\infty)^{\leq \omega}$ with the property that, for each $0 \leq i < |\b{\lambda}|$,
\[i+1 < |\b{\lambda}| \iff \lambda_i > 0.\]
That is, either $\b{\lambda}$ is finite with a single 0 entry, occurring at the end, or $\b{\lambda}$ is infinite with no $0$ entries.

If $\b{\lambda} = (\lambda_i \mid 0 \leq i < |\b{\lambda}|)$ is a rate sequence, then a \textit{$\bl$-duration sequence} is a sequence
\[\b{t} = (t_i \mid i < |\bl|) \in (0, \infty]^{\leq \omega}\]
such that, for each $0 \leq i < |\bl|$, $$t_i < \infty \iff \lambda_i > 0.$$ We write $\b{D_\lambda}$ for the set of all $\bl$-duration sequences. Note that 

\[
\b{D_\lambda} = \begin{cases}
		(0,\infty)^{|\bl| -1} \times \{\infty\} &\text{if } |\bl| < \omega.\\
	(0,\infty)^\omega &\text{if } |\bl| = \omega
\end{cases}
\]
depends only on the length of $\bl$, not on the components of $\bl$.

If $\b{\lambda} = (\lambda_i \mid 0 \leq i < |\b{\lambda}|)$ is a rate sequence, $\b{t} = (t_i \mid i < |\bl|) \in \b{D_\lambda}$ is a $\bl$-duration sequence, and $\b{w} = (w_i \mid i < |\b{w}|) \in (\{0,1\}^*)^*$ is a finite sequence of binary strings with $|\b{w}| \leq |\bl|$, then we call $\b{w}$ a \textit{$\bl$-approximation} (or a \textit{partial $\bl$-specification}) of $\b{t}$, and we write $\b{w} \sqsubseteq_{\bl} \b{t}$, if $w_i \sqsubseteq_{\lambda_i} t_i$ holds for all $0 \leq i < |\b{w}|$.

If $\bl$ is a rate sequence and $\b{w} \in (\{0,1\}^*)^*$ is a finite sequence of binary strings with $|\b{w}| \leq |\bl|$, then the \textit{$\bl$-cylinder generated by $\b{w}$} is the set
\[\b{D_\lambda}(\b{w}) = \{\b{t} \in \b{D_\lambda} \mid \b{w} \sqsubseteq_{\bl} \b{t}\}.\]

Given a rate sequence $\bl$, let
\[\Phi=\Phi[\bl]=\bigcup_{n\in\N}(\{0,1\}^n)^{\min\{n,|\bl|\}}.\]
It is routine to verify that the collection
\[\dl = \{\b{D_\lambda}(\b{w}) \mid \b{w} \in \Phi\}\]
is a semi-algebra of subsets of $\b{D_\lambda}$ that generates the $\sigma$-algebra $\mathscr{B}_\bl$ of all Borel subsets of $\b{D}_{\bl}$. Defining the function $\mu_\bl:\dl \rightarrow [0,1]$ by
\[\mu_\bl(\b{D}_{\bl}(\b{w})) =
    \begin{cases}
        2^{-|\b{w}|^2}&\text{if }|\b{w}|<|\bl|\\
        2^{-(|\bl|-1)|w_0|}&\text{if }|\bm{w}|=|\bl|\text{ and }w_{|\bl|-1}=1^{|w_0|}\\
        0&\text{otherwise}.
    \end{cases}
\]
for all $\b{w}=(w_0,\ldots,w_{n-1}) \in \Phi$, it follows by standard techniques that $\mu_\bl$ extends uniquely to a probability measure $\mu_\bl: \mathscr{B}_\bl \rightarrow [0,1]$.
When convenient, we use the abbreviation
\[\mu_\bl(\b{w}) = \mu_\bl(\b{D_\bl}(\b{w})).\]

An arbitrary set $E\subseteq\b{D}_\bl$ has \emph{probability 0}, and we write $\mu_\bl(E)=0$, if there is some Borel set $F\supseteq E$ such that $\mu_\bl(F)=0$. It has \emph{algorithmic probability 0}, and we write $\mu_{\bl,\constr}(E)=0$, if there is a computable function $g:\N\times\N\to\Phi$ such that, for all $k\in\N$,
\[E\subseteq\bigcup_{\ell\in\N}\b{D}_\bl(g(k,\ell))\]
and 
\[\sum_{\ell\in\N}\mu_\bl(g(k,\ell))\leq 2^{-k}.\]

If $\b{\lambda}$ is a rate sequence, then a \textit{$\bl$-martingale} is a function
\[d:\Phi \rightarrow [0,\infty)\]
that satisfies
\[d(\b{w})\mu_\bl(\b{w}) = \sum_{\substack{b_0,\ldots,b_n\in\{0,1\}\\w_n\in\{0,1\}^n}}d(w_0b_0,\ldots,w_{n}b_{n})\mu_\bl(w_0b_0,\ldots,w_{n}b_{n})\]
for all $\b{w} = (w_0, \ldots, w_{n-1}) \in\Phi$ such that $n<|\bl|$ and
\[d(\b{w})\mu_\bl(\b{w}) = \sum_{b_0,\ldots,b_{n-1}\in\{0,1\}}d(w_0b_0,\ldots,w_{n-1}b_{n-1})\mu_\bl(w_0b_0,\ldots,w_{n-1}b_{n-1})\]
for all $\b{w} = (w_0, \ldots, w_{n-1}) \in\Phi$ such that $n=|\bl|$.

Intuitively, a $\bl$-martingale $d$ is a strategy that a gambler may use for betting on approximations $w_i$ of the durations $t_i$ in a $\bl$-duration sequence $\b{t} = (t_i \mid i < |\b{t}|)$. The gambler's initial amount of money is the value $d(())$ of $d$ at the empty sequence $()$ of binary strings. If $\b{w} = (w_0, \ldots, w_{n-1}) \sqsubseteq_\bl \b{t}$, meaning that each $t_i$ is in the interval $\b{D_\bl}(w_i) \subseteq (0,\infty]$, then $d(\b{w})$ is the amount of money that the gambler has after betting on $\b{w}$.

A $\bl$-martingale $d$ \textit{succeeds} on a $\bl$-duration sequence $\b{t}$ if the set
\[\{d(\b{w}) \mid \b{w}\in\Phi\text{ and }\b{w} \sqsubseteq_\bl \b{t} \}\]
is unbounded. The \textit{success set} of a $\bl$-martingale $d$ is
\[S^\infty[d] = \{\b{t} \in \b{D_\bl} \mid d \text{ succeeds on } \b{t}\}.\]                         
We define a sequence $\b{t} \in \b{D_{\lambda}}$ to be \emph{(algorithmically) $\bl$-random} if there is no lower semicomputable $\bl$-martingale that succeeds on $\b{t}$.

\begin{theorem}\label{thm:indrand}
  If $\b\lambda$ is a rate sequence and $\b{t} = (t_0,t_1,\ldots) \in \b{D_{\lambda}}$ is $\b\lambda$-random, then for all $n\leq|\b\lambda|$, the finite sequence $(t_0,\ldots,t_{n-1})$ is independently $(\lambda_0,\ldots,\lambda_{n-1})$-random. The converse does not hold.
\end{theorem}

\begin{proof}
We prove the contrapositive. Let $\b\lambda$ be a rate sequence, let $\b{t} = (t_0,t_1,\ldots) \in \b{D_{\lambda}}$, and suppose there is some $n\leq |\b\lambda|$ such that $(t_0,\ldots,t_{n-1})$ is not independently $(\lambda_0,\ldots,\lambda_{n-1})$-random. Let
\[z=\bin_{(\lambda_0,\ldots,\lambda_{n-1})}(t_0,\ldots,t_{n-1}).\]
Then there is a lower semicomputable martingale $d$ such that $(t_0,\ldots,t_{n-1})\in S_{(\lambda_0,\ldots,\lambda_{n-1})}[d]$, meaning the set
\[\left\{d(w)\mid w\sqsubseteq z\right\}\]
is unbounded. Define the function $d':\Phi\to[0,\infty)$ by
\[d'(w_0b_0,\ldots,w_{\ell-1}b_{\ell-1})=
    \begin{cases}
        1&\text{if }\ell<n\\
        \frac{d\left(\bigsqcup_{r=0}^{n-1} w_rb_r\right)}{d\left(\bigsqcup_{r=0}^{n-1} w_r\right)}&\text{if }\ell\geq n,
    \end{cases}\]
for all $b_0,\ldots,b_{\ell-1}\in\{0,1\}$ and all $w_0,\ldots,w_{\ell-1}\in\{0,1\}^*$ such that $|w_0|=\ldots=|w_{\ell-1}|\geq \ell-1$. Then $d'$ is a $\bl$-martingale, and, for all $(w_0,\ldots,w_{n-1})\in\Phi$, the martingale value $d'(w_0,\ldots,w_{n-1})$ is at least a constant multiple of $d\left(\bigsqcup_{r=0}^{n-1} w_r\right)$. In particular, for every $(w_0,\ldots,w_{n-1})\in\Phi$ such that $(w_0,\ldots,w_{n-1})\sqsubseteq_\bl\b t$, the value $d'(w_0,\ldots,w_{n-1})$ is at least a constant multiple of $d(w)$, where $w$ is the length-$(n|w_0|)$ prefix of $z$. The value of $d$ on prefixes of $z$ is unbounded, and for every prefix $w'$ of $z$, we have $d(w')\leq 2^{n}d(w)$ for some prefix $w$ whose length is an integer multiple of $n$. Therefore, the value of $d'$ is unbounded on sequences $(w_0,\ldots,w_{n-1})\in\Phi$ such that $(w_0,\ldots,w_{n-1})\sqsubseteq_\bl\b t$, meaning $d'$ succeeds on $\b t$ and $\b t$ is not $\bl$-random.

To see that the converse does not hold, let $\bl\in[0,\infty)^\omega$ be any infinite rate sequence, and let $(t_0, t_1, \ldots)\in\b D_{\b \lambda}$ be such that $(t_0,\ldots,t_{n-1})$ is independently $(\lambda_0,\ldots,\lambda_{n-1})$-random for all $n\in\N$. Then $(t_0/2,\ldots,t_{n-1}/2)$ is also independently $(\lambda_0,\ldots,\lambda_{n-1})$-random for all $n\in\N$. There is a lower semicomputable $\bl$-martingale $d$ such that
\[d(0w_0,\ldots,0w_{n-1})=2^n\]
for all $n\in\N$ and all $w_1,\ldots,w_{n-1}\in\{0,1\}^*$. Intuitively, $d$ bets only on the first bit of each duration and hedges on all other bits. This martingale succeeds on $\b t=(t_0/2,\ldots,t_{n-1}/2)$ because $\bin(t_i/2)=0\bin(t_i)$. Therefore, $\b t$ is not $\b\lambda$-random.
\end{proof}

\section{Measure and martingales for continuous-time Markov chains}\label{sec:mmctmc}

\subsection{Continuous-time Markov chains}
A \emph{continuous-time Markov chain (CTMC)} is an ordered triple $$C = (Q, \lambda, \sigma),$$ where $Q$ is a countable set of states, $\lambda:Q \times Q \rightarrow [0,\infty)$ is a \emph{rate matrix} satisfying $\lambda(q,q) = 0$ for every $q \in Q$, and $\sigma: Q \rightarrow [0,1]$ is a \emph{state initialization} as described in section~\ref{sec:rss}.

Let $C = (Q, \lambda, \sigma)$ be a CTMC. At each time $t \in [0, \infty)$, $C$ is probabilistically in some state. At time $t = 0$, this state is chosen according to $\sigma$. For each state $q \in Q$, the real number $$\lambda_q = \sum_{r \in Q}\lambda(q, r)$$ is the \emph{rate out of} state $q$. If $\lambda_q = 0$, then $q$ is a \emph{terminal} state, meaning that if $C$ ever enters state $q$, then $C$ remains in state $q$ forever. If $\lambda_q > 0$, then $q$ is a \emph{nonterminal} state, and if $C$ enters $q$ at some time $t$, then the \emph{sojourn time} for which $C$ remains in state $q$ before moving to a new state is a random variable that has the exponential distribution with rate $\lambda_q$. Hence the expected sojourn time of $C$ in state $q$ is $\frac{1}{\lambda_q}$. When $C$ does move to a new state, it moves to state $r \in Q$ with probability
\[p(q, r) = \frac{\lambda(q,r)}{\lambda_q}.\]

Note that the CTMC model uses continuous time, with sojourn times ranging over $(0, \infty]$, but a discrete state space. Accordingly, its state transitions, called \emph{jump transitions}, are instantaneous.

\subsection{Trajectories in CTMCs}
A \emph{trajectory} of a CTMC $C = (Q, \lambda, \sigma)$ is a sequence $\b{\tau}$ of the form
\[\b{\tau} = ((q_n,t_n) \mid n \in \mathbb{N}) \in (Q \times (0,\infty])^\omega.\]
Intuitively, such a trajectory $\b{\tau}$ denotes the turn of events in which $q_0, q_1, \ldots $ are the successive states of $C$ and $t_0, t_1,\ldots $ are the successive sojourn times of $C$ in these states. Accordingly, we write
\[state_{\b{\tau}}(n) = q_n, \text{ } soj_{\b{\tau}}(n) = t_n\]
for each $n \in \mathbb{N}$.
When convenient, we write $\b{\tau}$ as an ordered pair
\[\b{\tau} = (\b{q}, \b{t}),\]
where $\b{q} = (q_n \mid n \in \mathbb{N})$ and $\b{t} = (t_n \mid n \in \mathbb{N})$.

There are two ways in which a trajectory $(\b{q}, \b{t})$ may fail to represent a ``true trajectory'' of the CTMC $C$ in the above intuitive sense. First, it may be  the case that $p(q_n, q_{n+1}) = 0$ (i.e., $\lambda(q_n, q_{n+1}) = 0$) for some $n \in \mathbb{N}$. This presents no real difficulty, since it merely says that the event ``$state_{\b{\tau}}(n) = q_n$ and $state_{\b{\tau}}(n+1) = q_{n+1}$'' has probability $0$. The second way in which  $(\b{q}, \b{t})$ may fail to represent a ``true trajectory'' is for some $q_n$ to be a terminal state of $C$. We deal with this by defining the \emph{length} of a trajectory $\b{\tau} = (\b{q}, \b{t})$ to be
\[\|\btau\| = \min\{n \in \mathbb{N} \mid q_n \text{ is terminal}\},\]
where $\min\emptyset = \infty$. We then intuitively interpret a trajectory $\b{\tau} =  (\b{q}, \b{t})$ with $\|\b{\tau}\| < \infty$ as the finite sequence $$\b{\tau}' = ((q_n,t_n') \mid n \leq \|\b{\tau}\|),$$ where each
\begin{equation*}
    t_n' =
    \begin{cases}
    t_n & \text{if } n < \|\b{\tau}\|\\
    \infty & \text{if } n = \|\b{\tau}\|.
    \end{cases}
\end{equation*}
We write $$\Omega = \Omega[C] = (Q \times (0,\infty])^\omega$$ for the set of all trajectories of $C$.

\subsection{Measure in CTMCs}

Elements of $(Q\times\{0,1\}^*)^*$ are called \emph{approximations} or \emph{partial specifications} of trajectories. The \emph{cylinder generated by} an approximation $w = ((q_0,u_0),\ldots,(q_{n-1},u_{n-1}))$ is the set
\[\Omega_w =\big\{ \b{\tau} \in \Omega \mid (\forall\, 0 \leq i < n)\ \big(state_{\b{\tau}}(i) = q_i\text{ and } soj_{\b{\tau}}(i) \in D_{\lambda_{q_i}}(u_i)\big)\big\};\]
in particular, the cylinder generated by the empty sequence $()$ is $\Omega_{()}=\Omega$. For two approximations $w,x$ and $\btau \in \Omega$, we write $w \pref \btau$ to indicate that $\btau \in \Omega_w$ and $w \pref x$ to indicate that $\Omega_x \subseteq \Omega_w$. It is easily verified that the latter condition is decidable. When it is convenient, we will write partial specifications in the form $(\b{q},\b{t})$, where $\b{q}\in Q^*$, $\b{t}\in(\{0,1\}^*)^*$, and $|\b{q}|=|\b{t}|$.

Let
\[\Psi=\Psi[C]=\bigcup_{n\in\N}(Q\times\{0,1\}^n)^n,\]
the set of all partial specifications which, for some $n$, specify $n$ states and $n$ sojourn times, each up to $n$ bits of precision. For each $w\in\Psi$, the \emph{probability} of the cylinder $\Omega_w$ in $C$, written $\mu_C(\Omega_w)$ or $\mu_C(w)$, is defined by $\mu_C(())=1$ and, for each $n\geq 1$ and $w=((q_0,u_0),\ldots,(q_{n-1},u_{n-1}))\in (Q\times\{0,1\}^n)^n$,
\[\mu_C(w)=\begin{cases}
    \sigma(q_0)\prod\limits_{i=0}^{\min\{n,\|w\|\}-2}\left(p(q_i,q_{i+1})2^{-|u_i|}\right)&\text{if }(\forall\, \|w\|\leq i<n)\ (q_i,u_i)=(q_{\|w\|},1^{n})\\
    0&\text{otherwise},
\end{cases}\]
where
\[\|w\|=\min\{i<n\mid q_i\text{ is terminal}\},\]
again with the convention $\min\emptyset=\infty$.

A set $X \subseteq \Omega$ has \emph{probability 0}, and we write $\mu_C(X)=0$, if there is a function $g:\N\times\N\to\Psi$ such that, for all $k\in\N$,
\[X\subseteq \bigcup_{\ell\in\N}\Omega_{g(k,\ell)}\]
and
\[\sum_{\ell\in\N}\mu_C(g(k,\ell))\leq 2^{-k}.\]

\subsection{Martingales for CTMCs}

A \emph{$C$-supermartingale} is a function $d:\Psi\to[0,\infty)$ such that, for all $w\in\Psi$,
\begin{equation}\label{eq:supermartingale}
d(w)\mu_C(w) \geq\sum_{\substack{w'\in\Psi\\w\sqsubseteq w'\\|w'|=|w|+1}}d(w')\mu_C(w').
\end{equation}
If equality holds in~\eqref{eq:supermartingale} for all $w\in\Psi$, then $d$ is a \emph{$C$-martingale}. A $C$-supermartingale $d$ \emph{succeeds} on a trajectory $\b\tau\in\Omega$, and we write $\b\tau\in S_C[d]$, if the set
\[\{d(w)\mid w\in\Psi\text{ and }w\sqsubseteq \b\tau\}\]
is unbounded.

\subsection{Analog of Ville's theorem for CTMC trajectories}

In order to prove an analog of Ville's theorem holds for CTMC trajectories, we first prove the following lemma, which is analogous to Kraft's inequality for prefix codes. A set $B$ of partial specifications of trajectories is a \emph{prefix-free set} if, for all distinct $w,x\in B$, we do not have $w\sqsubseteq x$.

\begin{lemma} [Kraft inequality for $C$-martingales]\label{lem:gki}
    Let $C = (Q, \lambda, \sigma)$ be a CTMC, $d$ a $C$-martingale, and $B \subseteq\Psi$  a prefix-free set. Then,
    \begin{equation}\label{eq:gki}
        d(())\geq \sum_{w\in B} d(w)\mu_C(w).
    \end{equation}
\end{lemma}

\begin{proof}
	If $d(()) = 0$, then~\eqref{eq:gki} is immediate. Hence, assume $d(()) >0$. Define the Borel probability measure $\pi:(Q\times\{0,1\}^\omega)^\omega\to[0,1]$ by
    \[\pi(w) = \frac{d(w)\mu_C(w)}{d(())}\]
    for all $w\in\Psi$. Now choose $\b\tau \in \Omega$ according to $\pi$, and let $E$ be the event that there exists some $w\in B$ such that $w\sqsubseteq \b\tau$. Then
    \begin{align*}
    1 &\geq \Pr(E)\\
    &= \sum_{w \in B} \pi(w)\\
    &= \frac{1}{d(())}\sum_{w \in B} d(w)\mu_C(w),
    \end{align*}
    which implies~\eqref{eq:gki}.
\end{proof}

We now prove the following analog of Ville's theorem~\cite{Ville39} for CTMC trajectories.
\begin{theorem}\label{thm:trajville}
    Let $C=(Q,\lambda,\sigma)$ be a CTMC and $X\subseteq\Omega[C]$. The following are equivalent. \begin{enumerate}
        \item $\mu_C(X)=0$
        \item There is a $C$-martingale $d$ such that $X\subseteq S_C[d]$.
        \item There is a $C$-supermartingale $d$ such that $X\subseteq S_C[d]$.
    \end{enumerate}
\end{theorem}
\begin{proof}
    (1 $\implies$ 2)\quad
    Suppose $\mu_{C}(X) = 0$, and let $g:\N\times\N\to\Psi$ be a function testifying to this. Assume without loss of generality that for all $k\in\N$, the collection $\{\Omega_{g(k,\ell)}\mid \ell\in\N\}$ of cylinders is pairwise disjoint. We wish to show that there exists a $C$-martingale $d$ such that $X \subseteq S_C[d]$.

    For each $k \in \N$, define the function $d_k: \Psi \to [0,\infty)$ by
    \[d_k(w) = |\{\ell \in \N \mid g(k,\ell) \sqsubseteq w\}|+ \sum_{\substack{\ell \in \N\\ w\sqsubset g(k,\ell)}} \frac{\mu_C(g(k,\ell))}{\mu_C(w)}.\]
    It is straightforward to verify that each $d_k$ is a $C$-martingale. Hence, the function $d:\Psi\to[0,\infty)$ defined by
    \[d(w) = \sum_{k \in \N} d_k(w)\]
    is also a $C$-martingale, which has initial capital
    \begin{align*}
        d(())&=\sum_{k\in\N}d_k(())\\
        &=\sum_{k\in\N}\left(|\{\ell \in \N \mid g(k,\ell) \sqsubseteq ()\}|+ \sum_{\substack{\ell \in \N\\ ()\sqsubset g(k,\ell)}} \frac{\mu_C(g(k,\ell))}{\mu_C(())}\right)\\
        &\leq \sum_{k\in\N}\left(|\{\ell \in \N \mid \mu_C(g(k,\ell))=1\}|+\sum_{\ell\in\N}\mu_C(g(k,\ell))\right)\\
        &\leq 1+\sum_{k\in\N}2^{-k}\\
        &=3,
    \end{align*}
    by the measure condition on $g$.

    To see that $X \subseteq S_C[d]$, let $\btau \in X$ and $\alpha \in \Z^+$. It suffices to show that there exists some $w\in\Psi$ such that $w\sqsubseteq \btau$ and $d(w) \geq \alpha$. For each $k\in\N$, define
    \[S^1[d_k] = \{w \in \Psi \mid d_k(w) \geq 1\},\]
    the \emph{unitary success set} of $d_k$.  Let $w_0,\ldots,w_{\alpha-1}$ be elements of $S^1[d_0],\ldots,S^1[d_{\alpha-1}]$, respectively, such that $w_k \sqsubseteq \btau$ for all $0 \leq k < \alpha$. These $w_k$ exist because, for all $k,\ell\in\N$,
    \[d_k(g(k,\ell))\geq |\{i\in\N\mid g(k,i)\sqsubseteq g(k,\ell)\}|\geq 1,\]
    so each $g(k,\ell)$ belongs to $S^1[d_k]$, and $\btau\in\bigcup_\ell g(k,\ell)$ holds for each $k\in\N$.
    
    Let $w$ be such that $w_k \sqsubseteq w$ for all $0 \leq k < \alpha$. Then
    \[d(w) = \sum_{k=0}^{\alpha-1} d_k(w) \geq \alpha,\]
    and we conclude that $X\subseteq S_C[d]$.

    (2 $\implies$ 1)\quad Now assume instead that there exists a $C$-martingale $d$ such that $X \subseteq S_C[d]$. Without loss of generality, assume $d$ has initial capital $d(())\leq 1$. We wish to show that $\mu_C(X) = 0$, i.e., that there exists a function $g:\N\times\N \to  \Psi$ satisfying, for all $k\in\N$, the covering property
    \[X\subseteq \bigcup_{\ell\in\N}\Omega_{g(k,\ell)}\]
    and the measure property
    \[\sum_{\ell\in\N}\mu_C(g(k,\ell))\leq 2^{-k}.\]

    For each $k \in \N$, define the set
    \[A_k = \left\{\btau\in\Omega \;\middle|\; \exists w\in\Psi\ \left(w\sqsubseteq \btau\text{ and }d(w) \geq 2^k\right)\right\}.\]
    The $A_k$ are open sets, so there is a function $g:\N\times\N\to\Psi$ such that, for all $k\in\N$, the set $\{g(k,\ell)\mid \ell\in\N\}$ is prefix-free and
    \[A_k=\bigcup_{\ell\in\N}\Omega_{g(k,\ell)},\]
    This union is disjoint by the prefix-free condition, so it suffices to show that $X\subseteq A_k$ and $\mu_C(A_k)\leq 2^{-k}$.

    Fix $k\in\N$. Since $X\subseteq S_C[d]$, we have that for all $\btau \in X$, there exists $w \in \Psi$ such that $w \sqsubseteq \btau$ and $d(w) > 2^{k+1}$, so $X\subseteq A_k$. To see that $\mu_C(A_k)\leq 2^{-k}$, let $B_k\subseteq \Psi$ be a set that includes, for each $\btau\in A_k$, exactly one prefix $w\sqsubset\btau$ such that $d(w)\geq 2^{k}$. Then
    \begin{align*}
        \mu_C(A_k)&=\mu_C\left(\bigcup_{w\in B_k}\Omega_w\right)\\
        &=\sum_{w\in B_k}\mu_C(w)\\
        &\leq 2^{-k}\sum_{w\in B_k}d(w)\mu_C(w)\\
        &\leq 2^{-k},
    \end{align*}
    by Lemma~\ref{lem:gki}. We conclude that $\mu_C(X)=0$.

    (2 $\iff$ 3)\quad The second condition trivially implies the third. For the reverse direction, observe that, given a $C$-supermartingale in which~\eqref{eq:supermartingale} is strict for some $w\in\Psi$, we must have $\mu_C(w)>0$, and therefore there is some $w'\in\Psi$ such that $w\sqsubseteq w'$, $|w'|=|w|+1$, and $\mu_C(w')>0$. Hence, we can increase $d(w')$ for the lexicographically first, shortest such $w'$ to make equality hold in~\eqref{eq:supermartingale} for $w$ while maintaining~\eqref{eq:supermartingale} on all of $\Psi$ and not introducing new violations of equality in~\eqref{eq:supermartingale}, except potentially at $w'$. We can repeat this alteration of $d$ for all $n\in\N$, for all
    \[w\in((Q\upharpoonright n)\times\{0,1\}^{\leq n})^{\leq n}\subseteq\Psi,\]
    where $Q\upharpoonright n$ is the first $n$ states in the lexicographic enumeration of $Q$. Let $d'$ be the result of this infinite sequence of alterations. As all violations of equality in~\eqref{eq:supermartingale} have been resolved, $d'$ is a $C$-martingale such that $d'(w)\geq d(w)$ for all $w\in\Psi$. Therefore $S_C[d]\subseteq S_C[d']$, and the third condition implies the second.
\end{proof}

A slightly simplified version of the above proof yields the following analog of Ville's theorem \cite{Ville39} for sojourn time sequences.

\begin{theorem}
\label{ville_rate_seq}
If $\bl$ is a rate sequence, then, for each set $E \subseteq \b{D_\bl},$ the following two conditions are equivalent.
\begin{enumerate}
    \item $\mu_\bl(E) = 0$.
    \item There is a $\bl$-martingale $d$ such that $E \subseteq S^\infty[d]$.
\end{enumerate}
\end{theorem}

\section{Random CTMC trajectories}\label{sec:rctmct}

\subsection{Algorithmic measure and randomness in CTMCs}
Let $C=(Q,\lambda,\sigma)$ be a CTMC. We assume that the states $q\in Q$ have canonical representations, so that it is clear what it means for functions $f:Q\rightarrow Q$ etc. to be computable.

A set $X \subseteq \Omega[C] $ has \emph{algorithmic probability 0}, and we write $\mu_{C,\constr}(X)=0$, if there is a computable function $g:\N\times\N\to\Psi$ such that, for all $k\in\N$,
\[X\subseteq \bigcup_{\ell\in\N}g(k,\ell)\]
and
\[\sum_{\ell\in\N}\mu_C(g(k,\ell))\leq 2^{-k}.\]
We define an individual trajectory $\btau$ of a CTMC $C$ to be  \emph{algorithmically random} if there is no lower semicomputable $C$-martingale $d$ such that $\btau\in S_C[d]$.

\subsection{Analog of Schnorr's theorem for CTMCs}

\begin{theorem}\label{thm:trajschnorr}
    Let $C=(Q,\lambda,\sigma)$ be a CTMC such that $\lambda$ and $\sigma$ are computable, and let $X\subseteq\Omega[C]$. The following are equivalent.
    \begin{enumerate}
        \item $\mu_{C,\constr}(X)=0$.
        \item There is a lower semicomputable $C$-martingale $d$ such that $X\subseteq S_C[d]$.
    \end{enumerate}
\end{theorem}
\begin{proof}
    It suffices to observe that the arguments for (1$\implies$2) and (2$\implies$1) in the proof of Theorem~\ref{thm:trajville} can be effectivized. If $g$ is a computable function testifying to $\mu_{C,alg}(X)=0$, then each $d_k$ is lower semicomputable, so their sum $d$ is a lower semicomputable $C$-martingale that succeeds on $X$. Conversely, if $d$ is a lower semicomputable $C$-martingale that succeeds on $X$, then we can define a computably enumerable $B_k$ by including, for each $\btau\in A_k$, the first-discovered prefix $w\sqsubseteq \btau$ such that $d(w)\geq 2^{k+1}$. The resulting $g$ is a computable function testifying to $\mu_{C,\constr}(X)=0$.
\end{proof}

A slightly simplified version of the above proof yields the following analog of Schnorr's theorem \cite{DBLP:journals/mst/Schnorr71} for sequences of sojourn times.

\begin{theorem}\label{schnorr_rate_seq} If $\bl$ is a computable rate sequence, then, for each set $E \subseteq \b{D_{\lambda}}$, the following two conditions are equivalent.
\begin{enumerate}
    \item $\mu_{\bl,\constr}(E) = 0$.
    \item There is a lower semicomputable $\bl$-martingale $d$ such that $E \subseteq S^\infty[d]$.
\end{enumerate}
\end{theorem}   

\subsection{Analog of van Lambalgen's theorem for CTMC trajectories}

\begin{theorem}
    Let $C=(Q,\lambda,\sigma)$ be a CTMC such that $\lambda$ and $\sigma$ are computable, and let $\b{q}=(q_n\mid n\in\N)\in Q^\omega$, $\b{t}=(t_n\mid n\in\N)\in(0,\infty]^\omega$, and $\bl=(\lambda_{q_n}\mid n\in\N)\in [0,\infty)^\omega$. The trajectory $(\b{q},\b{t})$ is random if and only if $\b{q}$ is random and $\b{t}$ is $\bl$-random relative to an oracle for $\b{q}$.
\end{theorem}
\begin{proof}
    First suppose $\b{q}$ is not random. Then by Theorem~\ref{schnorr_state}, we have $\mu_{\sq,\sigma,\constr}(\{\b{q}\})=0$, where $\sq$ is the probabilistic transition system for the states of $C$, with initialization $\sigma$. Hence, there is a computable function $g:\N\times\N\to Adm_\sq(\sigma)$ such that, for all $k\in\N$,
    \[\sum_{j\in\N}\mu_{\sq,\sigma}(g(k,j))\leq 2^{-k}\]
    and there is some $\ell$ such that $g(k,\ell)\sqsubseteq \b{q}$.
    
    Define a computable function $g':\N\times\N\to\Psi$ as follows. Fix $k\in\N$. For each $j\in\N$, let $n=|g(k+1,j)|$ and $W_{k,j}=(\{0,1\}^{n})^{n}$. Lexicographically list the $2^{n^2}$ elements $((r_0,w_0),\ldots,(r_{n-1},w_{n-1}))$ of $\Psi$ such that $(r_0,\ldots,r_{n-1})=g(k+1,j)$ and $(w_0,\ldots,w_{n-1})\in W_{k,j}$, and let $A_k$ be the concatenation, over all $j\in\N$, of these lists. For each $\ell\in\N$, if $\ell\leq |A_k|$, define $g'(k,\ell)$ to be the $\ell$\textsuperscript{th} entry in $A_k$; otherwise, define $g'(k,\ell)=(\b{r},(0^{k+\ell+2})^{k+\ell+2})$, where $\b{r}\in Q^{k+\ell+2}$ is arbitrary, so that $\mu_C(g'(k,\ell))\leq 2^{-(k+\ell+2)^2}<2^{-k-\ell-2}$ whenever $\ell>|A_k|$.
    
    Then for each $k\in\N$, as there is some $\ell\in\N$ such that $g(k+1,\ell)\sqsubseteq \b{q}$, there is also some $\ell'\in\N$ such that $g'(k,\ell')\sqsubseteq(\b{q},\b{t})$, and
    \begin{align*}
        \sum_{\ell\in\N}\mu_C(g'(k,\ell))&=\sum_{\ell\leq|A_k|}\mu_C(g'(k,\ell))+\sum_{\ell>|A_k|}\mu_C(g'(k,\ell))\\
        &<\sum_{j\in\N}\sum_{\b{w}\in W_{k,j}}\mu_{\sq,\sigma}(g(k+1,j))\mu_\bl(\b{w})+\sum_{\ell\in\N}2^{-k-\ell-2}\\
        &=\sum_{j\in\N}\left(\mu_{\sq,\sigma}(g(k+1,j))\sum_{\b{w}\in W_{k,j}}\mu_\bl(\b{w})\right)+2^{-k-1}\\
        &=\sum_{j\in\N}\mu_{\sq,\sigma}(g(k+1,j))+2^{-k-1}\\
        &\leq 2^{-k}.
    \end{align*}
    Thus, $g'$ testifies to $\mu_{C,\constr}(\{(\b{q},\b{t})\})=0$. By Theorem~\ref{thm:trajschnorr}, we conclude that $(\b{q},\b{t})$ is not random in this case.

    Suppose instead that $\b{t}$ is not $\bl$-random relative to an oracle for $\b{q}$. Then there is an oracle Turing machine $M$ where $M^{\b{q}}$ computes a function $g^{\b{q}}:\N\times\N\to\Phi$ such that, for every $k\in\N$,
    \[\sum_{j\in\N}\mu_\bl(g^{\b{q}}(k,j))\leq 2^{-k}\]
    and there is some $j\in\N$ such that $g^{\b{q}}(k,j)\sqsubseteq\b{t}$. We can modify this $M$ so that it also has the property that for \emph{every} oracle $A$, $M^{A}$ computes a function $g^A:\N\times\N\to\Phi$ such that, for every $k\in\N$,
    \[\sum_{j\in\N}\mu_\bl(g^A(k,j))\leq 2^{-k}.\]

    For each $k,m\in\N$, let
    \[B_{k,m}=\{(\b{r},\b{u})\in\Psi\mid |\b{r}|=m\text{ and }\exists j\in \N\ (g^{\b{r}}(k,j)\sqsubseteq \b{u})\}\]
    and
    \[V_{k,m}=\bigcup_{(\b{r},\b{u})\in B_{k,m}}\Omega_{(\b{r},\b{u})}.\]
    Then
    \begin{align*}
        \mu_C(V_{k,m})&\leq \sum_{\b{r}\in Q^m}\sum_{j\in\N}\mu_{\sq,\sigma}(\b{r})\mu_{\bl}(g^{\b{r}}(k,j))\\
        &=\sum_{\b{r}\in Q^m}\mu_{\sq,\sigma}(\b{r})\sum_{j\in\N}\mu_{\bl}(g^{\b{r}}(k,j))\\
        &\leq 2^{-k}\sum_{\b{r}\in Q^m}\mu_{\sq,\sigma}(\b{r})\\
        &=2^{-k}.
    \end{align*}
    Noting that $V_{k,m}\subseteq V_{k,m+1}$ and letting
    \[V_k=\bigcup_{m\in\N}V_{k,m},\]
    this implies $\mu_C(V_k)\leq 2^{-k}$. Furthermore, as there is some $j\in\N$ such that $g^{\b{q}}(k,j)\sqsubseteq\b{t}$, we must have $(\b{q},\b{t})\in V_k$. Since the $V_k$ are uniformly computably enumerable open sets, there is therefore a computable function $g':\N\times\N\to\Psi$ such that, for each $k\in\N$,
    \[\bigcup_{\ell\in\N}\Omega_{g'(k,\ell)}=V_k\]
    and
    \[\sum_{\ell\in\N}\mu_C(g'(k,\ell))\leq 2^{-k}.\]
    This function $g'$ testifies to $\mu_{C,\constr}(\{(\b{q},\b{t})\})=0$, and we conclude, again by Theorem~\ref{thm:trajschnorr}, that $(\b{q},\b{t})$ is not random in this case.

    Conversely, suppose $(\b{q},\b{t})$ is not random. For this direction, we follow the proof in~\cite{downey2010algorithmic}, attributed to Nies, of part of van Lambalgen's theorem. Since $(\b{q},\b{t})$ is not random, Theorem~\ref{thm:trajschnorr} tells us there is some computable function $g:\N\times\N\to\Psi$ testifying to $\mu_{C,\constr}(\{(\b{q},\b{t})\})=0$. For all $k\in\N$, let
    \[V_k=\bigcup_{\ell\in\N} \Omega_{g(2k+1,\ell)},\]
    so the $V_k$ are uniformly computably enumerable and, for all $k\in\N$, we have $(\b{q},\b{t})\in V_k$ and
    \begin{equation}\label{eq:vk}
        \mu_C(V_k)\leq\sum_{\ell\in\N}\mu_C(g(2k+1,\ell))
        \leq 2^{-2k-1}.
    \end{equation}

    For all $k\in\N$, using the notation~\eqref{eq:Qcylinder}, define the sets
    \[A_k=\left\{\b{s}\in Q^*\mid \mu_C(V_k\cap(\mathds{A}_{\b{s}}(\sigma)\times\Phi))>2^{-k}\mu_{\sq,\sigma}(\b{s})\right\},\]
    \[S_k=\bigcup_{\b{s}\in A_k}\mathds{A}_{\b{s}}(\sigma),\]
    and
    \[S'_k=\bigcup_{j\geq k}S_j.\]
    Thus, $S'_{k+1}\subseteq S'_k$ for all $k\in\N$, and the $S_k$ and $S'_k$ are uniformly computably enumerable. For all $k\in\N$, let $A'_k\subseteq A_k$ be a prefix-free set such that
    \[S_k=\bigcup_{\b{s}\in A'_k}\mathds{A}_{\b{s}}(\sigma).\]
    Then for all $k\in\N$,
    \begin{align*}
        \mu_{\sq,\sigma}(S_k)&\leq\sum_{\b{s}\in A'_k}\mu_{\sq,\sigma}(\b{s})\\
        &<2^k\sum_{\b{s}\in A'_k}\mu_C(V_k\cap(\mathds{A}_{\b{s}}(\sigma)\times\Phi))\tag{$A'_k\subseteq A_k$}\\
        &\leq 2^k\mu_C(V_k)\tag{$A'_k$ is prefix-free}\\
        &\leq 2^{-k-1}\tag{by~\eqref{eq:vk}}.
    \end{align*}
    Therefore,
    \[\mu_{\sq,\sigma}(S'_k)\leq \sum_{j\geq k}\mu_{\sq,\sigma}(S_j)<\sum_{j\geq k}2^{-j-1}=2^{-k}.\]
    Since the $S'_k$ are uniformly computably enumerable open sets, there is a computable function $g:\N\times\N\to Adm_\sq(\sigma)$ such that, for each $k\in\N$, the cylinders in $\left\{\mathds{A}_{g(k,\ell)}(\sigma)\;\middle|\; \ell\in\N\right\}$ are pairwise disjoint, and their union is $S'_k$. Suppose that for all $k\in\N$ we have $\b{q}\in S'_k$. Then this function $g$ testifies to $\mu_{\sq,\sigma,\constr}(\{\b{q}\})=0$, so Theorem~\ref{schnorr_state} implies that $\b{q}$ is not random in this case.

    Hence, assume there is some $k_0\in\N$ such that $\b{q}\not\in S'_{k_0}$, noting that this implies $\b{q}\not\in S_k$ for all $k\geq k_0$. For all $i\in\N$, let $\b{s}_i=\b{q}\upharpoonright (i+1)$, which cannot be in $A_k$ for any $k\geq k_0$. Since $\mu_{\sq,\sigma}(\b{s}_i)= 0$ would imply (again by Theorem~\ref{schnorr_state}) that $\b{q}$ is not random, further assume $\mu_{\sq,\sigma}(\b{s}_i)> 0$. For all $i,k\in\N$, define the sets
    \[B_k^i=\{\b{v}\in\Phi\mid |\b{v}|=i\text{ and }\mathds{A}_{\b{s}_i}(\sigma)\times\b{D_\lambda}(\b{v})\subseteq V_k\},\]
    and
    \[T_k^i=\bigcup_{\b{v}\in B_k^i}\b{D_\lambda}(\b{v}),\]
    noting that $T_k^i\subseteq T_k^{i+1}$. For all $k\geq k_0$, we have
    \begin{align*}
        \mu_{\sq,\sigma}(\b{s}_i)\mu_{\bl}(T_k^i)&=\mu_C(\mathds{A}_{\b{s}_i}(\sigma)\times T_k^i)\\
        &\leq \mu_C(V_k\cap (\mathds{A}_{\b{s}_i}(\sigma)\times\Phi))\\
        &\leq 2^{-k}\mu_{\sq,\sigma}(\b{s}_i),
    \end{align*}
    where the last inequality holds because $\b{s}_i\not\in A_k$. Therefore, $\mu_{\bl}(T_k^i)\leq 2^{-k}$.
    
    For each $k\in\N$, define the set $T_k=\bigcup_{i\in\N} T_k^i$. Then $\b{t}\in T_k$ for all $k\in\N$, and for all $k\geq k_0$,
    \[\mu_{\bl}(T_k)=\sup_{i\in\N}\mu_{\bl}(T_k^i)\leq 2^{-k}.\]
    Since the $T_k$ are open sets and are uniformly computably enumerable relative to $\b{q}$, there is a function $g:\N\times\N\to\Phi$ that is computable relative to $\b{q}$ such that, for each $k\in\N$, the cylinders in $\left\{\b{D}_\bl(g(k,\ell))\;\middle|\;\ell\in\N\right\}$ are pairwise disjoint, and their union equals $T_k$. This $g$ is a witness to $\b{t}$ having algorithmic probability 0 relative to $\b{q}$, and it follows by Theorem~\ref{schnorr_rate_seq} that $\b{t}$ is not $\bl$-random relative to $\b{q}$ in this case. 
\end{proof}

\section{Kolmogorov complexity and CTMC randomness}

Random trajectories can also be characterized using \textit{Kolmogorov complexity}. First, we briefly review this notion in the classical setting. We fix a universal \textit{self-delimiting Turing machine} (see \cite{oLiVit19}), $U$. The Kolmogorov complexity, $K$, of a (finite) string $x$ in $\{0,1\}^*$ is the length of a shortest program for a self-delimiting Turing machine which prints $x$. That is, $K: \{0,1\}^* \rightarrow \mathbb{N}$ is defined by $$K(x) = \min \{|\pi| \mid U(\pi) = x\ \text{ and } \pi \in \{0,1\}^*\}.$$ When $x$ is not a binary string, but some other finite object, $K(x)$ is defined from the above by routine coding.

In the terminology of~\cite{chaitin1987incompleteness}, an \emph{information content measure} is a function $f:\N\to\N$ that is upper semicomputable and satisfies $\sum_{n\in\N}2^{-n}<1$. Will repeatedly use the fact that $K$ is a minimal information content measure, in the sense that for every information content measure $f$, there is a constant $c$ such that, for all $i\in\N$, we have $K(n)\leq f(n)+c$.

In this section, we deal with the full class $(Q\times\{0,1\}^*)^*$ of partial specifications of trajectories rather than just the restricted class $\Psi$ that we used to define measure and randomness in sections~\ref{sec:mmctmc} and~\ref{sec:rctmct}. This allows a partial specification to approximate different sojourn times to different levels of precision, which lets us state the following results in a more general way, although these theorems also hold for the restricted class $\Psi$. The \textit{profile} of a cylinder $\Omega_w$, where $w=((q_0,u_0),\ldots,(q_{n-1},u_{n-1}))\in (Q\times\{0,1\}^*)^n$, is an element of $\N^n$, denoted
\[\prof(w) = (|u_0|,\ldots ,|u_{n-1}|).\]
The profile of the empty sequence is $\prof(())=()$.

\begin{observation}
    For each CTMC $C$ and each profile $p$,
	\[\sum_{w\mid \prof(w)=p}\mu_C(w) =1.\]
\end{observation}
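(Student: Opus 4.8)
The plan is to prove the identity by induction on the length $n$ of the profile $p=(m_1,\dots,m_n)$, peeling off its last coordinate and exploiting two elementary ``fairness'' identities that $\mu_C$ satisfies on cylinders (this is the CTMC analogue of Observation~\ref{}'s display~(3.2) and of the fact that $\mu_{\bl}(\dl(\b{w}))=2^{-\sum|w_i|}$ sums to $1$ over all $\b{w}$ of a fixed profile). The base case $n=0$ is immediate: the only partial specification with empty profile is $w=\lambda$, and $\mu_C(\lambda)=1$. An alternative, and arguably cleaner, route is to argue directly that $\{\Omega_w:\prof(w)=p\}$ is a countable partition of $\Omega[C]$ modulo a $\mu_C$-null set and then invoke $\mu_C(\Omega[C])=1$; I return to this in the last paragraph.

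For the inductive step, write $p'=(m_1,\dots,m_{n-1})$ and pass from $p'$ to $p$ in two stages. \emph{Stage 1 (open a new state slot).} For a specification $v$ whose last state $q'$ is nonterminal, the product formula for $\mu_C$ together with $\sum_{r\in Q}p(q',r)=\lambda_{q'}/\lambda_{q'}=1$ gives $\mu_C(v)=\sum_{q\in Q}\mu_C\bigl(v(q,\lambda)\bigr)$ — the measure-theoretic form of condition~\eqref{eq:A}. Summing over all $v$ with $\prof(v)=p'$ yields $\sum_{\prof(v)=p'}\mu_C(v)=\sum_{\prof(w)=(m_1,\dots,m_{n-1},0)}\mu_C(w)$, since the extensions $v(q,\lambda)$ run over exactly the specifications of profile $(m_1,\dots,m_{n-1},0)$. \emph{Stage 2 (grow the last string one bit at a time).} For a specification ending in a pair $(q,u)$ with $q$ nonterminal, the fact that $D_{\lambda_q}(u0)$ and $D_{\lambda_q}(u1)$ form a left-to-right partition of $D_{\lambda_q}(u)$ into equiprobable halves gives $\mu_C\bigl(w(q,u)\bigr)=\mu_C\bigl(w(q,u0)\bigr)+\mu_C\bigl(w(q,u1)\bigr)$ — the measure-theoretic form of condition~\eqref{eq:B}, which at bottom is just $2^{-|u|}=2^{-|u0|}+2^{-|u1|}$ times a common factor. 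Iterating this $m_n$ times and summing over all specifications of profile $(m_1,\dots,m_{n-1},0)$ gives $\sum_{\prof(w)=(m_1,\dots,m_{n-1},0)}\mu_C(w)=\sum_{\prof(w)=p}\mu_C(w)$. Chaining the two stages with the inductive hypothesis $\sum_{\prof(v)=p'}\mu_C(v)=1$ closes the induction.

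The step requiring care — and the main obstacle — is the bookkeeping at terminal states. When the last state of a specification $v$ is terminal one cannot ``move on'': the sojourn time there is $\infty$, there is nothing left to bet on, and the extensions $v(q,\lambda)$ do not carry $v$'s mass forward in the naive way, so Stage 1 must be invoked in tandem with the conventions defining $\Omega_w$ and $\mu_C(\Omega_w)$ for specifications that touch a terminal state. Those conventions are precisely what ensure that a specification which has already terminated retains its $\mu_C$-mass as the profile lengthens, and that a profile $p$ with $m_n>0$ is realized only by specifications all of whose states are nonterminal (on which the clean identities above apply verbatim). I expect the smoothest resolution is the partition argument: given a trajectory $\b{\tau}\in\Omega[C]$, its first $n$ states are fixed, and — using the Section~4 observation that for $\lambda>0$ the intervals $D_\lambda(u)$, $u\in\{0,1\}^{m}$, partition $(0,\infty]$ — each relevant sojourn time lies in exactly one dyadic interval of the prescribed precision, so $\b{\tau}$ belongs to exactly one cylinder of profile $p$, with the terminal-state conventions accounting for the remaining ($\mu_C$-null, after correct accounting) cases. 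The displayed equality is then just $\mu_C(\Omega[C])=1$.
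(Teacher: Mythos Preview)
The paper states this as an Observation without proof, so there is no argument of theirs to compare against. Your two-stage induction is the natural route and is correct whenever no terminal state carries positive mass among the first $n$ states; Stages~1 and~2 are precisely the measure-level versions of conditions~\eqref{eq:A} and~\eqref{eq:B}, and under that hypothesis your partition heuristic also goes through verbatim.

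You are right to flag the terminal-state case as the obstacle, but your proposed resolution does not work --- and in fact it cannot, because the Observation is false as literally stated once a terminal state is reached with positive probability. When $q_{n-1}$ is terminal, the paper's formula for $\mu_C(\Omega_w)$ omits the factor $2^{-|u_{n-1}|}$, so all $2^{m_n}$ choices of $u_{n-1}\in\{0,1\}^{m_n}$ yield the \emph{same} cylinder with the \emph{same} mass, and the sum overcounts. Concretely, take $Q=\{a,b\}$ with $b$ terminal, $\lambda(a,b)=1$, $\sigma(a)=1$: for the profile $p=(0,1)$ one has $\mu_C\bigl((a,\lambda),(b,0)\bigr)=\mu_C\bigl((a,\lambda),(b,1)\bigr)=1$ while every other $w$ of this profile has $\mu_C(w)=0$, so the sum is $2$, not $1$. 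The same example breaks the partition argument, since $\Omega_{(a,\lambda),(b,0)}=\Omega_{(a,\lambda),(b,1)}$. Dually, if a terminal state occurs strictly before the last position then every extension has mass $0$, so mass is \emph{lost} rather than ``retained as the profile lengthens.'' Read charitably, the Observation (and the Kolmogorov-complexity lemmas that cite it) are intended under the tacit hypothesis that no terminal state is reached with positive probability; under that hypothesis your induction is a complete proof.
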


For each partial specification $w\in(Q\times\{0,1\}^*)^*$ of a CTMC $C$, the \emph{self-information} of $w$ is
\[h(w)=\log\left(\frac{1}{\mu_C(w)}\right).\]
The following two lemmas are analogous to standard results used in the Kolmogorov complexity characterization of algorithmically random sequences. We let $p$ range over all profiles, and assume some natural encoding (enumerating process) between natural numbers and profiles, and also between natural numbers and cylinders. 

\begin{lemma}\label{helperLemma}
	In a CTMC $C$, every partial specification $w$ satisfies
	\[K(w)\leq h(w)+ K(\prof(w))+ O(1).\]
    In particular, if $w\in\Psi$ then $K(w)\leq h(w)+K(|w|)+O(1)$.
\end{lemma}
\begin{proof}
    We have
    \begin{align*}
         1 &\geq \sum_{p} 2^{-K(p)} \\
         &= \sum_{p} \left(2^{-K(p)}\sum_{\prof(w)=p}2^{-h(w)}\right) \\
         &= \sum_p \sum_{\prof(w) = p} 2^{-(K(\prof(w)) + h(w))}\\
         &= \sum_{w} 2^{-(K(\prof(w))+h(w))}
    \end{align*}
    where the inequality is due to the minimality of $K$ as an information content measure, and the first equality holds because the inner sum equals $1$. That is, the upper semicomputable function
    \[w\mapsto K(\prof(w))+h(w)\]
    is an information content measure, and it follows, again by the minimality of $K$, that
    \[K(w)\leq  h(w)+ K(\prof(w))+ O(1).\]
    The inequality holds for $w\in\Psi$ because in that case $\prof(w)=(|w|)^{|w|}$, which is computable given $|w|$.
\end{proof}

When $W$ is a set of partial specifications, we write $\mu_C(W)$ as shorthand for $\mu_C\big(\bigcup_{w\in W}\Omega_w\big)$.

\begin{lemma}\label{lem:mainLem}
    There is a constant $c \in \mathbb{N}$ such that, for every profile $p$ of a CTMC $C$ and every $k \in \mathbb{N}$,
    \[\mu_C(\{w\mid \prof(w) = p\text{ and }K(w) < h(w) + K(p) -k\}\big)<2^{c-k}.\]
    In particular, for every $k,\ell\in\N$,
    \[\mu_C(\{w\mid w\in\Psi,\ |w| = \ell,\text{ and }K(w) < h(w) + K(\ell) -k\}\big)<2^{c-k}.\]
\end{lemma}

\begin{proof}
    We note that
    \[
     \sum_{p}\sum_{\prof(w)=p} 2^{-K(w)}=\sum_{w}2^{-K(w)}<1.
    \]
    Defining $f:\N^*\to\R$ by
    \[f(p) = \sum_{\prof(w)=p} 2^{-K(w)},\]
    the above says 
    \[\sum_{p}2^{-(-\log f(p))}  = \sum_{p}f(p)< 1.\]
    So $-\log(f(p))$ is an information content measure. 
    Then by the minimality of $K$ as an information content measure~, there exists a constant $c$ such that
    \[
        K(p)\leq -\log(f(p)) + c,
    \]
    and hence 
    \[ -K(p)+c \geq \log(f(p)). \]
    Exponentiating both sides and applying the definition of $f(p)$ yields
    \begin{align*}
      2^{-K(p)+c}&\geq \sum_{\prof(w)=p} 2^{-K(w)} \\
                 & =\sum_{\prof(w)=p} \mu_C(w)\frac{1}{\mu_C(w)}2^{-K(w)}\\
                 &=\sum_{\prof(w)=p} \mu_C(w)2^{\log{\frac{1}{\mu_C(w)}}}2^{-K(w)}\\
                 &= \sum_{\prof(w)=p} \mu_C(w)2^{h(w)-K(w)}\\
                 &=\mathbb{E}_{\mu_C}[2^{h(w)-K(w)}\mid \prof(w)=p].
    \end{align*}
    Therefore, letting $\mu'_C=\mu_C(\,\cdot\mid \prof(w)=p)$,
    \begin{align*}
        \mu'_C\left(\left\{w\mid K(w)<h(w)+K(\prof(w))-k\right\}\right)
        &=\mu'_C\left(\left\{w\mid h(w)-K(w)>k-K(\prof(w))\right\}\right)\\
        &=\mu'_C\left(\left\{w\;\middle|\; 2^{h(w)-K(w)}>2^{k-K(\prof(w))}\right\}\right)\\
        &<\frac{\mathbb{E}_{\mu'_C}[2^{h(w)-K(w)}]}{2^{k-K(\prof(w))}}\\
        &\leq \frac{2^{-K(\prof(w))+c}}{2^{k-K(\prof(w))}}\\
        &=2^{c-k},
    \end{align*}
where the first inequality holds by the Markov inequality and the second holds by Lemma~\ref{helperLemma}.
\end{proof}

With these lemmas, we can establish the Kolmogorov complexity characterization of randomness for trajectory objects, which is exactly analogous to a well-known characterization of the algorithmic randomness of sequences over finite alphabets \cite{zvonkin1970complexity,schnorr1977survey}.

\begin{theorem} \label{thmKolChar}
Let $C$ be any CTMC such that $\mu_C$ is a computable measure. Then a trajectory $\btau$ in $C$ is random if and only if there exists $k \in \mathbb{N}$ such that for every $w \pref \btau$,
$K(w)\geq h(w)-k$.
\end{theorem}
\begin{proof}
    First, assume that for every $k$, there is at least one $w\pref \btau$, such that $K(w)< h(w)-k$. It suffices to show that $\btau$ is not random. We let
    \[U_k=\{w\mid K(w)< h(w)-k\}.\]
    By our assumption,
    \[\btau\in\bigcap_{k\in\N}\bigcup_{w\in U_k}\Omega_w.\]
    For each profile $p\in\N^*$, define the \emph{$p$-slice} of $U_k$ to be
    \[U_k^p=\{w\mid w\in U_k \text{ and } \prof(w)=p \}.\]
    Note that by Lemma \ref{lem:mainLem}, we have $\mu_C(U_k^p)<2^{c-k-K(p)}$. Therefore,
    \begin{align*}
        \mu_C(U_k)&\leq\sum_p \mu_C(U_k^p)\\
        &<\sum_p 2^{c-k-K(p)}\\
        &\leq 2^{c-k},
    \end{align*}
    by the Kraft inequality. As the sets $U_0, U_1, \ldots$ are uniformly computably enumerable, it follows that
    \[\bigcap_{k\in\N}\bigcup_{w\in U_{c+k}}\Omega_w\]
    has algorithmic probability 0 and contains $\btau$. Thus, by Theorem~\ref{thm:trajschnorr}, $\btau$ is not random.
    
    Conversely, assume that $\btau$ is not random. Then, by Theorem~\ref{thm:trajschnorr}, there exists a computable function
    \[g:\N\times\N\to\Psi[C]\]
    such that, for every $k\in \mathbb{N}$,
    \[\btau \in \bigcup_{\ell\in\N}\Omega_{g(k,\ell)}\]
    and
    \[\sum_{\ell\in\N}\mu_C(g(k,\ell))\leq 2^{-k}.\]
    
    Define the function $F:\N\times\N\to\N$ by $F(k,\ell)=h(g(2k+1,\ell))-k$ and the function $f:\N\to\N$ by $f(i)=F(\pi^{-1}(i))$, where $\pi:\N\times\N\to\N$ is the Cantor pairing function. Then
    \begin{align*}
        \sum_{i\in\N}2^{-f(i)}&=\sum_{k\in\N}\sum_{\ell\in\N}2^{-F(k,\ell)}\\
        &=\sum_{k\in\N}2^k\sum_{\ell\in\N}2^{-h(g(2k+1,\ell))}\\
        &=\sum_{k\in\N}2^k\sum_{\ell\in\N}\mu_C(g(2k+1,\ell))\\
        &\leq \sum_{k\in\N}2^{-k-1}\\
        &=1.
    \end{align*}
    As $\mu_C$ is a computable measure, the set $\{(i,m))\mid f(i)\leq m\}$ is computably enumerable, so we have shown that $f$ is an information content measure. By the minimality of $K$ as an information content measure, for all $k,\ell\in\N$,
    \begin{align*}
        K(g(k,\ell))&\leq K(\pi(k,\ell))+O(1)\\
        &\leq f(\pi(k,\ell))+O(1)\\
        &\leq F(k,\ell)+O(1)\\
        &=h(g(2k+1,\ell))-k+O(1).
    \end{align*}
    Since for all $k\in\N$, there is some $\ell\in\N$ such that $g(2k+1,\ell)\sqsubseteq\btau$, we conclude there is no $k\in\N$ such that for every $w \pref \btau$, $K(w)\geq h(w)-k$.
\end{proof}

\section{Chemical reaction network randomness and Zeno phenomena}

In this last section we use our machinery to prove a fundamental fact about random trajectories in stochastic mass-action chemical reaction networks. We first briefly review the varieties of such chemical reaction networks and indicate how they are all special cases of CTMCs. More extensive discussions of this appear in~\cite{jSoCoWiBr08,oCSWB09,oFeinberg19} and elsewhere.

A \emph{chemical reaction network} (\emph{CRN}) is a pair $N = (S, R)$, where $S$ is a finite set of \emph{species} (intuitively, types of molecules), and $R$ is a finite set of \emph{reactions}, each of which is a triple $\rho = (r, p, k) \in \N^S \times \N^S \times [0, \infty)$, where $r$ and $p$ are as in the rate-free CRNs defined in section~\ref{sec:bts} above (recalling that $p$ and $r$ are distinct), and $k$ is the \emph{rate constant} of the reaction $\rho$.

Stochastic mass-action CRNs are usually just called stochastic CRNs, because mass-action kinetics are so commonly used as to be the default. Even so, there are several inequivalent definitions. These definitions have the following commonalities.
\begin{enumerate}[(i)]
    \item A \emph{state} of a stochastic CRN $N = (S, R)$ is a vector $q \in \N^S$ whose intuitive meaning is that, for each $X \in S$, there are $q(X)$ molecules of the species $X$ in solution in the state $q$. That is, $q(X)$ is the \emph{molecular count} of the species $X$ in the state $q$.
    \item The \emph{rate} $\lambda(\rho, q)$ of a reaction $\rho = (r, p, k)$ of $N$ in a state $q$ is determined by $\rho$, $q$, and perhaps the volume $V$ of the solution, which is taken to be a constant. In all the definitions,
    \begin{equation}\label{eq:crn1}
        \lambda(\rho, q) = k u P,
    \end{equation}
    where the factor $u$ is bounded by a constant and
    \begin{equation}\label{eq:crn2}
        P = \prod_{X \in S} q(X)^{r(X)}.
    \end{equation}
    It is the form of the product in~\eqref{eq:crn2} that characterizes the kinetics here as ``mass-action''. The factor $u$ above incorporates two variabilities in the literature, namely, whether and how the volume $V$ is incorporated and whether the ``sampling with replacement'' factors $q(X)^{r(X)}$ are replaced by the corresponding ``sampling without replacement'' factors, which are potentially smaller but more realistic in nanoscale applications with small molecular counts.
    \item Given a stochastic CRN $N = (S, R)$ as above, together with a probabilistic initialization $\sigma: \N^S \to [0,1]$ defined as in section~\ref{sec:rss} (and noting that this initialization is deterministic if some $\sigma(q) = 1$), the \emph{CTMC of $N$ initialized by $\sigma$} is the CTMC
    \[C(N, \sigma) = (\N^S, \hat\lambda, \sigma),\]
    where
    \[\hat\lambda: \N^S \times \N^S \to [0, \infty)\]
     is defined by
    \[\hat\lambda(q, r) = \sum_{\substack{\rho \in R\\\rho\text{ transforms }q\text{ to }r}} \lambda(\rho, q)\]
    for all $q, r \in \N^S$.
\end{enumerate}

The behavior of the stochastic CRN $N$ with the initialization $\sigma$ is \emph{by definition} the behavior of the CTMC $C(N, \sigma)$ defined as above.

By~\eqref{eq:crn1} and~\eqref{eq:crn2}, we have the following.
\begin{observation}\label{obs:bddrate}
    If the molecular counts along a trajectory $\btau$ of a stochastic CRN $N$ are bounded, then the rates of the reactions along $\btau$ are also bounded.
\end{observation}

We now prove our final result.
\begin{theorem}[Non-Zeno property]\label{non_zeno}
  Let $N$ be a stochastic CRN and $\sigma$ an initialization of $N$. If
  \[\b\tau = ((q_0, t_0),(q_1, t_1), \ldots  )\in \Omega[C(N,\sigma)]\]
  is random and has bounded molecular counts, then $\b\tau$ satisfies
  \begin{equation}\label{eq:non-zeno}
      \sum_{n=0}^\infty t_i = \infty.
  \end{equation}
\end{theorem}
\begin{proof}
    Let $\b{\tau} =((q_0,t_0),(q_1,t_1),\ldots)\in \Omega[C(N,\sigma)]$ be a trajectory with bounded molecular counts. Let
    \[M=\sup_{i\in\N}\hat\lambda_{q_i},\]
    which is finite by Observation~\ref{obs:bddrate}. Then for all $i\in\N$,
    \begin{align*}
        \b{D}_{\hat\lambda_{q_i}}(0)&=(0,F_{\hat\lambda_{q_i}}^{-1}(1/2)]\\
        &=(0,\ln(2)/\hat\lambda_{q_i}]\\
        &\supseteq(0,\ln(2)/M].
    \end{align*}
    
    Let $\Psi=\Psi[C(N,\sigma)]$, and suppose~\eqref{eq:non-zeno} does not hold. Then only finitely many of the $t_i$ can be greater than $\ln(2)/M$, so there must exist $k \in \N$ such that, for all $i \geq k$, we have $t_i \in \b{D}_{\hat\lambda_{q_i}}(0)$. Hence, for every $w=((r_0,u_0),\ldots,(r_{n-1},u_{n-1}))\in\Psi$ such that $w\sqsubseteq\btau$ and $n>k$, we have $u_k[0]=\ldots=u_{n-1}[0]=0$, where $u[0]$ denotes the first bit of a string $u$.
    
    Define a $C$-martingale $d:\Psi\to[0,\infty)$ by $d(())=1$ and, for all $w=((r_0,u_0),\ldots,(r_{n-1},u_{n-1}))\in\Psi$ and all $w'=((r_0,u'_0),\ldots,(r_n,u'_n))\in\Psi$ such that $w\sqsubseteq w'$,
    \[d(w') =
        \begin{cases}
        d(w) & \text{ if } n < k\\
        2 d(w) & \text{ if } n \geq k\text{ and } u'_n[0] = 0\\
        0 & \text{ if } n \geq k\text{ and } u'_n[0] = 1.
        \end{cases}
    \]
    Intuitively, $d$ does not begin to bet until it reaches the $(k+1)$\textsuperscript{th} sojourn time, and $d$ bets only on the first bit of each subsequent sojourn time. Thus, $d$ succeeds on $\b\tau$, and $d$ is clearly lower semicomputable. Therefore, $\b\tau$ cannot be random.
\end{proof}

\printbibliography

\end{document}